\begin{document}
\doi{10.1080/0003681YYxxxxxxxx}
 \issn{1563-504X}
\issnp{0003-6811}
\jvol{00} \jnum{00} \jyear{2009} \jmonth{July}

\markboth{Michal Fe\v ckan and Vassilis M. Rothos}{Traveling Waves of DNLS}

\def\C{\mathbb{C}}
\def\Q{\mathbb{Q}}
\def\H{\rm H}
\def\N{\mathbb{N}}
\def\R{\mathbb{R}}
\def\Z{\mathbb{Z}}
\def\I{\bm{I}}
\def\wt{\widetilde}
\def\th{\theta}
\def\la{\lambda}
\def\ds{\displaystyle}
\def\ep{\varepsilon}
\def\eu{ {\, \textrm{\rm e} }}
\def\om{\omega}
\def\RR{{\mathcal R}}
\def\KK{{\mathcal K}}
\def\FF{{\mathcal F}}
\def\LL{{\mathcal L}}
\def\ol{\overline}
\def\al{\alpha}
\def\be{\beta}

\title{\bf Traveling Waves of Discrete Nonlinear Schr\"odinger Equations with Nonlocal Interactions}
\author{Michal Fe\v ckan$^{\dagger}$$^{\ast}$\thanks{$^\ast$Corresponding author. Email: Michal.Feckan@fmph.uniba.sk
\vspace{6pt}} and Vassilis M. Rothos$^{\dagger \dagger}$\\\vspace{6pt}  $^{\dagger}${\em{Department of Mathematical Analysis and
Numerical Mathematics, Comenius University, Mlynsk\'a dolina, 842 48 Bratislava,
Slovakia, and Mathematical Institute of Slovak Academy of Sciences,
\v{S}tef\'anikova 49, 814 73 Bratislava, Slovakia}}; \\ $^{\dagger \dagger}${\em{School of Mathematics, Physics and Computational Sciences, Faculty of Engineering,
Aristotle University of Thessaloniki, Thessaloniki 54124, Greece}}\\\vspace{6pt}\received{v3.3 released July 2009} }

\maketitle

\begin{abstract}
Existence and bifurcation results are derived for quasi periodic traveling waves of discrete nonlinear Schr\"odinger
equations with nonlocal interactions and with polynomial type potentials. Variational tools are used. Several concrete nonlocal
interactions are studied as well.
\end{abstract}

\begin{keywords} nonlocal interactions, discrete Sch\"odinger equation, traveling wave, symmetry
\end{keywords}
\begin{classcode} 34K14, 37K60, 37L60\end{classcode}\bigskip

\section{Introduction}

One of the most exciting areas in applied mathematics is the study
of the dynamics associated with the propagation of information.
Coherent structures like solitons, kinks, vortices, etc., play a
central role, as carriers of energy, in many nonlinear physical
systems~\cite{book}. Solitons represent a rare example of a
(relatively) recently arisen mathematical object which has found
successful high-technology applications~\cite{Has}. The nature of
the system dictates that the relevant and important effects occur
along one axial direction. Interplay between
nonlinearity and periodicity is the focus of recent studies in
different branches of modern applied mathematics and nonlinear
physics. Applications range from nonlinear optics, in the dynamics
of guided waves in inhomogeneous optical structures and photonic
crystal lattices, to atomic physics, in the dynamics of
Bose-Einstein condensate (BEC) droplets in periodic potentials,
and from condensed matter, in Josephson-junction ladders, to
biophysics, in various models of the DNA double strand. Analysis
and modeling of these physical situations are based on nonlinear
evolution equations derived from underlying physics equations,
such as nonlinear Maxwell equations with periodic coefficients
~\cite{SulemC99}. In particular, the systems of 2nd-order NLS
equations, both continuous and discrete, were applied in nonlinear
physics to study a number of experimental and theoretical
problems. Spatial non-locality of the nonlinear response is also
naturally present in the description of BECs where it represents
the finite range of the bosonic interaction. Demands on the
mathematics for techniques to analyze these models may best be
served by developing methods tailored to determining the local
behavior of solutions near these structures. The discreteness of
space i.e., the existence of an underlying spatial lattice is
crucial to the structural stability of these spatially localized
nonlinear excitations.

During the early years, studies of intrinsic localized modes were
mostly of a mathematical nature, but the ideas of localized modes
soon spread to theoretical models of many different physical
systems, and the discrete breather concept has been recently
applied to experiments in several different physics
subdisciplines. Most nonlinear lattice systems are not integrable
even if the partial differential equation (PDE) model in the
continuum limit is. While for many years spatially continuous
nonlinear PDE's and their localized solutions have received a
great deal of attention, there has been increasing interest in
spatially discrete nonlinear systems. Namely, the dynamical
properties of nonlinear systems based on the interplay between
discreteness, nonlinearity and dispersion (or diffraction) can
find wide applications in various physical, biological and
technological problems. Examples are coupled optical fibres
(self-trapping of light)~\cite{Cris,Sergej,coupled,Lenz}, arrays
of coupled Josephson junctions~\cite{Joseph}, nonlinear charge and
excitation transport in biological macromolecules, charge
transport in organic semiconductors~\cite{semi}.

Prototype models for such nonlinear lattices take the form of
various nonlinear lattices \cite{Aubry}, a particularly important
class of solutions of which are so called discrete breathers which
are homoclinic in space and oscillatory in time. Other questions
involve the existence and propagation of topological defects or
kinks which mathematically are heteroclinic connections between a
ground and an excited steady state. Prototype models here are
discrete version of sine-Gordon equations, also known as
Frenkel-Kontorova (FK) models, e.g.\ \cite{ACR03}. There are many
outstanding issues for such systems relating to the global
existence and dynamics of localized modes for general
nonlinearities, away from either continuum or anti-continuum
limits.

In the main part of the previous studies of the discrete NLS
models the dispersive interaction was assumed to be short-ranged
and a nearest-neighbor approximation was used. However, there
exist physical situations that definitely can not be described in
the framework of this approximation. The DNA molecule contains
charged groups, with long-range Coulomb interaction $1/r$ between
them. The excitation transfer in molecular crystals \cite{Davydov}
and the vibron energy transport in biopolymers \cite{Scott} are
due to transition dipole-dipole interaction with $1/r^3$
dependence on the distance, $r$. The nonlocal (long-range)
dispersive interaction in these systems provides the existence of
additional length-scale: the radius of the dispersive interaction.
We will show that it leads to the bifurcating properties of the
system due to both the competition between nonlinearity and
dispersion, and the interplay of long-range interactions and
lattice discreteness.

In some approximation the equation of motion is the nonlocal
discrete NLS
\begin{equation}\label{e0a}
\imath \dot u_n=\sum\limits_{m\neq
n}J_{n-m}(u_{n}-u_{m})+|u_n|^2u_n,\quad n\in\Z\, ,
\end{equation}
where the long-range dispersive coupling is taken to be either
exponentially $J_{n}=J{\rm e}^{-\beta |n|}$ with $\beta>0$, or algebraically
$J_{n}=J|n|^{-s}$ with $s>0$, decreasing with the distance $n$ between
lattice sites. In both cases the constant $J$ is normalized such
that $\sum_{n=1}^{\infty}J_{n}=1$, for all $\beta$ or $s$. The
parameters $\beta$ and $s$ are introduced to cover different
physical situations from the nearest-neighbor approximation
$(\beta\to\infty, s\to\infty)$ to the quadrupole-quadrupole
$(s=5)$ and dipole-dipole $(s = 3)$ interactions. The Hamiltonian
$H$ and the number of excitations $N$
\begin{equation}\label{e0b}
H=\frac{1}{2}\sum_{n,
m\in\Z}J_{n-m}|u_{n}-u_{m}|^{2}-\frac{1}{2}\sum_{n\in\Z}|u_{n}|^{4},\quad
{\rm and}\quad N=\sum_{n\in\Z}|u_{n}|^{2}
\end{equation}
are conserved quantities corresponding to the set of (\ref{e0a}).

It should be also noted that the derivation of a discrete equation
from the Gross-Pitaevskii equation produces at the intermediate
step a fully nonlocal discrete NLS equation for the coefficients
of the wave function expansion over the complete set of the
Wannier functions. Further reduction to the case of the only band
with the strong localization of the Wannier functions (the
tight-binding approximation) leads to the standard local DNLS
equation. Recently Abdullaev et al.~\cite{Abd:08} extended this
approach to the case of periodic nonlinearities and derived a
number of nonintegrable lattices with different nearest-neighbor
nonlinearities.

In this paper, we study the discrete nonlinear
Schr\"odinger equations on the lattice $\Z$ (DNLS) with nonlocal
interactions of forms
\begin{equation}\label{e1}
\imath \dot u_n=\sum\limits_{j\in\N}a_j\Delta_ju_{n}+f(|u_n|^2)u_n,\quad n\in\Z
\end{equation}
where $u_n\in\C$, $\Delta_ju_{n}:=u_{n+j}+u_{n-j}-2u_{n}$ are $1$-dimensional discrete Laplacians and it holds

\begin{itemize}
\item[(H1)] $f\in C(\R_+,\R)$ for $\R_+:=[0,\infty)$, $f(0)=0$ and $a_j\in\R$ with $\sum\limits_{j\in\N}|a_j|<\infty$.
Moreover, there are constants $s>0$, $\mu>1$, $c_1>0$, $c_2>0$ and $\bar r>0$ such that
$$
\begin{gathered}
|f(w)|\le c_1(w^s+1),\quad c_2(w^{s+1}-1)\le F(w),\quad
\mu F(w)-\bar r<f(w)w\end{gathered}
$$
for any $w\ge 0$, where $F(w)=\int\limits_0^wf(z)dz$. Furthermore,
$\limsup_{w\to 0_+}f(w)/w^{\wt s}<\infty$ for a constant $\wt
s>0$.
\end{itemize}
Of course we suppose that not all $a_j$ are zero. Note any polynomial $f(w)=p_1w+\cdots +p_sw^s$, $s\in\N$ with $p_s>0$
satisfies (H1). Furthermore, \eqref{e1} can be rewritten into a standard form
\begin{equation}\label{e1a}
\imath \dot u_n=\sum\limits_{m\ne n}a_{|m-n|}\left(u_{m}-u_n\right)+f(|u_n|^2)u_n,\quad n\in\Z.
\end{equation}
It is well known that \eqref{e1a} conserves two dynamical invariants
$$
\begin{gathered}
\sum\limits_{n\in\Z}|u_n|^2\quad -\textrm{the norm},\\
\sum\limits_{n\in\Z}\left[-\frac{1}{2}\sum\limits_{m\ne n}a_{|m-n|}\left|u_{m}-u_n\right|^2+F(|u_n|^2)\right]\quad -\textrm{the energy}.
\end{gathered}
$$

Differential equations with nonlocal interactions on lattices have
been studied in \cite{ALK,ACG,BCh1,BChCh1,BZh1,CGGJMR,CCK,LCh,MG},
while DNLS (discrete nonlinear Schr\"odinger) in \cite{CKMF,CCMK,CGGJMR,GFB,LPK}. Nowadays it is
clear that a large number of important models of various fields of
physics are based on DNLS type equations with several forms of
polynomial nonlinearities starting with the simplest
self-focusing cubic (Kerr) nonlinearity, then following with the
cubic onsite nonlinearity relevant for Bose-Einstein condensates,
then with more general discrete cubic nonlinearity in Salerno
model up to cubic-quintic ones (see \cite{CCMK} for more
references).

We are interested in the existence of traveling wave solutions $u_n(t)=U(n-\nu t)$ of \eqref{e1} with a quasi
periodic function $U(z)$, $z=n-\nu t$ and some $\nu\ne0$.

First, we introduce a function
$$
\Phi(x):=\frac{4}{x}\sum\limits_{j\in\N}a_j\sin^2\left[\frac{x}{2}j\right]\, .
$$
\begin{remark}\label{inrem1}Clearly $\Phi\in C(\R\setminus\{0\},\R)$, $\Phi$ is odd, $\Phi(2\pi k)=0$ for
any $k\in \Z\setminus\{0\}$, and $\Phi(x)\to 0$ as $|x|\to \infty$.
If $\sum\limits_{j\in\N}j|a_j|<\infty $ then $\Phi\in C(\R,\R)$ and if $\sum\limits_{j\in\N}j^2|a_j|<\infty $ then $\Phi\in C^1(\R,\R)$.
Consequently the range $\RR \Phi:=\Phi(\R\setminus\{0\})$ is either an interval $[-\bar R,\bar R]$ or $(-\bar R,\bar R)$ here with
possibility $\bar R=\infty$ (see Section \ref{examples} for concrete examples).\end{remark}

Now we can state the following existence result.

\begin{theorem}\label{th1} Let (H1) hold and $T>0$. Then for almost each $\nu\in\R\setminus\{0\}$ and any rational $r\in \Q\cap (0,1)$,
there is a nonzero periodic traveling wave solution $u_n(t)=U(n-\nu t)$ of \eqref{e1} with $U\in C^1(\R,\C)$ and such that
\begin{equation}\label{prop}
U(z+T)=\eu^{2\pi r\imath}U(z),\, \forall z\in \R\, .
\end{equation}
Moreover, for any $\nu \in \R\setminus\{0\}$ there is at most a finite number of $\bar r_1,\bar r_2,\cdots,\bar r_m\in (0,1)$
such that equation
$$
-\nu=\Phi\left(\frac{2\pi}{T}(\bar r_j+k)\right)
$$
has a solution $k\in\Z$. Then for any $r\in (0,1)\setminus \{\bar r_1,\bar r_2,\cdots,\bar r_m\}$ there is a nonzero quasi periodic traveling
wave solution $u_n(t)=U(n-\nu t)$ with the above properties. In particular, for any $|\nu|>\bar R$ and $r\in(0,1)$, there is such a nonzero
quasi periodic traveling wave solution.\end{theorem}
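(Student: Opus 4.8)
The plan is to convert \eqref{e1} into a strongly indefinite variational problem for $T$-periodic functions and to extract the wave by a linking argument. First I would put $u_n(t)=U(n-\nu t)$ into \eqref{e1a}, turning it into the advance--delay equation $-\imath\nu U'(z)=\sum_{j\in\N}a_j\bigl(U(z+j)+U(z-j)-2U(z)\bigr)+f(|U(z)|^2)U(z)$, and impose the Bloch ansatz $U(z)=\eu^{\imath\lambda z}V(z)$ with $\lambda=\frac{2\pi}{T}(r+k)$ (any fixed $k\in\Z$) and $V$ being $T$-periodic; then \eqref{prop} holds automatically, and dividing out $\eu^{\imath\lambda z}$ leaves $\LL V=f(|V|^2)V$, where $\LL V:=-\imath\nu V'+\nu\lambda V-\sum_{j\in\N}a_j\bigl(\eu^{\imath\lambda j}V(\cdot+j)+\eu^{-\imath\lambda j}V(\cdot-j)-2V\bigr)$. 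Expanding $V$ in its Fourier series and using $\Delta_j\eu^{\imath xn}=-4\sin^2(xj/2)\,\eu^{\imath xn}$ together with $4\sum_{j\in\N}a_j\sin^2(xj/2)=x\,\Phi(x)$, one checks that $\LL$ is self-adjoint on $L^2$ of the $T$-periodic $\C$-valued functions (with the real inner product $\mathrm{Re}\int_0^TV\bar W\,dz$), diagonal in the Fourier basis with eigenvalues $\mu_p=\frac{2\pi}{T}(r+k+p)\bigl(\nu+\Phi(\frac{2\pi}{T}(r+k+p))\bigr)$, $p\in\Z$; since $\Phi(x)\to0$ as $|x|\to\infty$ (Remark~\ref{inrem1}), $|\mu_p|\to\infty$, so $\LL$ has discrete spectrum and compact resolvent.

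Next I would isolate the nonresonant $(\nu,r)$. As $r\in(0,1)$ we have $r+k+p\neq0$, so $0$ is an eigenvalue of $\LL$ exactly when $\Phi(\frac{2\pi}{T}m)=-\nu$ for some $m\in r+\Z$; for fixed $\nu\neq0$ the decay of $\Phi$ confines any such $m$ to a bounded set, which (using the structure of $\Phi$) leaves at most finitely many offending residues $\bar r_1,\dots,\bar r_m\in(0,1)$, and none when $|\nu|>\bar R$. For a fixed rational $r$ the forbidden $\nu$ lie in the countable set $\{-\Phi(\frac{2\pi}{T}m):m\in r+\Z\}$, and the union of these over the countably many rationals is Lebesgue-null --- this gives the ``almost each $\nu$'' clause. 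Off the exceptional sets $\LL$ is invertible, and since $\mu_p\to+\infty$ as $p\to+\infty$ and $\mu_p\to-\infty$ as $p\to-\infty$ when $\nu>0$ (and conversely for $\nu<0$), both the positive and negative spectral subspaces $E^\pm$ of $\LL$ are infinite-dimensional, with $\mathrm{Re}\langle\LL V,V\rangle\geq\delta\|V\|^2$ on $E^+$ and $\leq-\delta\|V\|^2$ on $E^-$ for some $\delta>0$.

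Then I would carry out the variational argument on the form domain $E\cong H^{1/2}$ of $T$-periodic $\C$-functions, splitting $E=E^+\oplus E^-$ and setting $J(V)=\frac12\mathrm{Re}\langle\LL V,V\rangle-\frac12\int_0^TF(|V|^2)\,dz=\frac12\|V^+\|^2-\frac12\|V^-\|^2-\frac12\int_0^TF(|V|^2)\,dz$. By (H1) one has $|F(w)|\leq C(w^{\wt s+1}+w^{s+1})$, so $J\in C^1(E,\R)$ with $J'(V)=0$ equivalent to $\LL V=f(|V|^2)V$, and $\int_0^TF(|V|^2)\,dz=o(\|V\|^2)$ as $V\to0$ (this uses $\wt s>0$), hence $J\geq\rho>0$ on a small sphere of $E^+$; the bound $F(w)\geq c_2(w^{s+1}-1)$ drives $J\to-\infty$ on large subsets of $\R^+e\oplus E^-$ for any fixed $e\in E^+$, which provides the linking geometry. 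The Ambrosetti--Rabinowitz inequality $\mu F(w)-\bar r<f(w)w$ with $\mu>1$ bounds the Cerami sequences of $J$ in $E$, and the compact embeddings $H^{1/2}\hookrightarrow L^q$ (all $q<\infty$, in one dimension) upgrade this to strong subconvergence, so $J$ satisfies the Cerami condition on every level. A generalized linking theorem for strongly indefinite functionals (as in Kryszewski--Szulkin, or via a Galerkin $(PS)^*$ scheme) then yields a critical point $V\in E$ at level $\geq\rho>0$, so $V\not\equiv0$ and $U=\eu^{\imath\lambda z}V\not\equiv0$.

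For regularity, a priori $V\in H^{1/2}\hookrightarrow\bigcap_{q<\infty}L^q$, so $f(|V|^2)V$ and the shift sum (absolutely convergent since $\sum|a_j|<\infty$) lie in every $L^q$; solving $\LL V=f(|V|^2)V$ for $V'$ gives $V\in W^{1,q}$ for all $q<\infty$, hence $V\in C^0$, and then the right-hand side is continuous, so $V'\in C^0$ and $U\in C^1(\R,\C)$. Since $u_n(t)=U(n-\nu t)$ then solves \eqref{e1} and $U(z+T)=\eu^{2\pi r\imath}U(z)$, the assertion follows; moreover, when $r$ is rational with denominator $q$ one gets $U(z+qT)=U(z)$, a genuinely periodic traveling wave, while for irrational $r$ (in particular for every $r\in(0,1)$ when $|\nu|>\bar R$) it is quasi periodic. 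The hard part of the program is the variational core --- fixing the working space, verifying the linking inequalities uniformly, and above all the Palais--Smale/Cerami compactness for a functional whose quadratic form has infinite-dimensional positive and negative parts; by comparison the Fourier diagonalization and the residue count are routine.
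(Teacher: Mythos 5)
Your overall architecture is the paper's: your Bloch factorization $U(z)=\eu^{\imath\la z}V(z)$ with $T$-periodic $V$ and the twisted operator $\LL$ is unitarily equivalent to the paper's choice of working directly in the space $X_r$ spanned by $\eu^{(r+k)z\imath}$; your eigenvalues $\mu_p=\frac{2\pi}{T}(r+k+p)\bigl(\nu+\Phi(\frac{2\pi}{T}(r+k+p))\bigr)$ reproduce exactly the paper's spectrum of $A_r$ and hence the same nonresonance condition, the countability argument for ``almost every $\nu$'', the boundedness of (PS)-type sequences from $\mu F(w)-\bar r<f(w)w$, and the compact-embedding/regularity bootstrap all match the paper (which cites Rabinowitz for regularity). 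The one step that does not go through as written is the variational core itself, which you correctly single out as the hard part.

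You propose the classical strongly indefinite linking pair: a small sphere $S_\rho\cap E^+$ against the half-cylinder $M=\{t e+v^-:\ t\ge 0,\ \|te+v^-\|\le R\}$, and you verify only $J\ge\rho>0$ on the sphere and $J\to-\infty$ at infinity on $\R^+e\oplus E^-$. The Kryszewski--Szulkin theorem also requires $\sup_{\partial M}J<\inf_{S_\rho\cap E^+}J$, and the base face $\{t=0,\ \|v^-\|\le R\}$ belongs to $\partial M$; there you need $J\le 0$ (or at least a bound below the linking level) on a \emph{large} ball of $E^-$, which is automatic only if $F\ge0$. Hypothesis (H1) does not give that: it only yields $F(w)\ge c_2(w^{s+1}-1)$, and admissible nonlinearities with sign-changing $F$ exist (e.g. $f(w)=3w^2-w$, $F(w)=w^3-w^2/2<0$ on $(0,1/2)$, which satisfies (H1)). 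Taking $v^-=\sqrt{w_0}\,\eu^{2\pi p z\imath/T}$ a pure mode of $E^-$ whose eigenvalue $\mu_p<0$ has small modulus (possible for nonresonant $\nu$ close to a value of $\Phi$, and even for $|\nu|$ slightly above $\bar R$), one gets $J(v^-)=\frac{T}{2}\bigl(|F(w_0)|-|\mu_p|w_0\bigr)>0$, while $\inf_{S_\rho\cap E^+}J\to0$ as $\rho\to0$; so the required boundary inequality genuinely fails for some data allowed by Theorem \ref{th1}. The paper sidesteps precisely this by using the Li--Szulkin local linking plus (PS)$^*$ theorem (Theorem \ref{thma1}): its geometric hypotheses are local at $0$, where your $o(\|V\|^2)$ estimate gives $J\ge0$ near $0$ in $E^+$ and $J\le0$ near $0$ in $E^-$, and the only global ingredient is $J\to-\infty$ on each Galerkin space consisting of a finite-dimensional slice of one spectral half plus the whole other half, which follows from $F\ge c_2(w^{s+1}-1)$ together with norm equivalence on the finite-dimensional factor --- a mild extension of the estimate you already have. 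So either add a hypothesis forcing $F\ge0$ (not available here) or carry out the Galerkin/(PS)$^*$ local-linking scheme you mention only in passing; with that replacement the rest of your argument is sound and coincides with the paper's proof.
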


When a nonresonance condition of Theorem \ref{th1} fails, then we have the following bifurcation results.

\begin{theorem}\label{th2}
Suppose $f\in C^2(\R_+,\R)$ with $f(0)=0$. If there are $\bar r_1\in(0,1)$, $\nu\in\RR\Phi\setminus\{0\}$ and $T>0$  such that all solutions
$k_1,k_2,\cdots,k_{m_1}\in \Z$ of equation
$$
-\nu=\Phi\left(\frac{2\pi}{T}(\bar r_1+k)\right)
$$
are either nonnegative or negative, and $m_1>0$. Then for any
$\ep>0$ small there are $m_1$ branches of nonzero quasi periodic
traveling wave solutions $u_{n,j,\ep}(t)=U_{j,\ep}(n-\nu_\ep t)$
of \eqref{e1} with $U_{j,\ep}\in C^1(\R,\C)$, $j=1,2,\cdots,m_1$,
and nonzero velocity $\nu_{\ep}$ satisfying
$U_{j,\ep}(z+T)=\eu^{2\pi \bar r_1\imath}U_{j,\ep}(z)$, $\forall
z\in \R$ along with $\nu_\ep\to \nu$ and
$U_{j,\ep}\rightrightarrows 0$ uniformly on $\R$ as $\ep\to0$.
\end{theorem}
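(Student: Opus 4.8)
The plan is to recast the traveling-wave problem as a bifurcation equation on the resonant subspace and to extract the $m_1$ branches by an $S^1$-equivariant Lusternik--Schnirelmann argument.

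\emph{Step 1: reduction to a functional equation.} Setting $u_n(t)=U(n-\nu t)$, $z=n-\nu t$, turns \eqref{e1} into the advance--delay equation
\[
-\imath\nu U'(z)=\sum_{j\in\N}a_j\bigl(U(z+j)+U(z-j)-2U(z)\bigr)+f(|U(z)|^2)U(z).
\]
Property \eqref{prop} with $r=\bar r_1$ is encoded by $U(z)=\sum_{k\in\Z}c_k\eu^{\imath\om_k z}$, $\om_k:=\frac{2\pi}{T}(\bar r_1+k)$, and since $\sum_{j}a_j(\eu^{\imath\om j}+\eu^{-\imath\om j}-2)=-4\sum_j a_j\sin^2(\om j/2)=-\om\Phi(\om)$, the equation becomes $L_\nu U=f(|U|^2)U$, where $L_\nu$ is the self-adjoint operator with real symbol $\om_k(\nu+\Phi(\om_k))$; equivalently $L_\nu U=-\imath\nu U'-\sum_j a_j\Delta_jU$. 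I would work in $X:=\{U=\sum_kc_k\eu^{\imath\om_k\cdot}:\sum_k(1+k^2)|c_k|^2<\infty\}\hookrightarrow C^0$, on which $L_\nu\colon X\to X^{*}$ is bounded. As $|U|^2$ is $T$-periodic, $L_\nu U=f(|U|^2)U$ is the Euler--Lagrange equation of
\[
\mathcal J_\nu(U)=\frac12\langle L_\nu U,U\rangle-\frac1{2T}\int_0^TF(|U(z)|^2)\,dz,
\]
and, since $\sum_j|a_j|<\infty$, a one-step bootstrap of the advance--delay equation upgrades any weak solution to $U\in C^1(\R,\C)$. Both $\mathcal J_\nu$ and the equation are invariant under the gauge action $U\mapsto\eu^{\imath\theta}U$.

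\emph{Step 2: linearization and Lyapunov--Schmidt.} Because $f(0)=0$ gives $F(w)=o(w)$, $\mathcal J_\nu''(0)$ is the quadratic form $Q_\nu(U)=\sum_k\om_k(\nu+\Phi(\om_k))|c_k|^2$; as $\bar r_1\in(0,1)$ forces $\om_k\neq0$, its degenerate subspace is precisely the $2m_1$-dimensional resonant space $X_0:=\mathrm{span}_{\C}\{\eu^{\imath\om_{k_i}\cdot}:i=1,\dots,m_1\}$, i.e. the modes with $-\nu=\Phi(\om_{k_i})$, while $L_\nu$ restricts to an isomorphism of the complementary $L_\nu$-invariant space $X_1$ onto $X_1^{*}$. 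Put $\nu_\ep=\nu+\delta$, split $U=\phi+\psi$ with $\phi\in X_0$, $\psi\in X_1$, and observe that $M:=\partial_\nu L_\nu$ (which is $-\imath\,d/dz$) preserves $X_0$. The $X_1$-projection of $L_{\nu_\ep}U=f(|U|^2)U$ reads $(L_\nu+\delta P_1M)\psi=P_1f(|\phi+\psi|^2)(\phi+\psi)$; for $\delta$ small the left-hand operator is invertible and the right-hand side is $O(\|\phi+\psi\|^3)$ (using $f\in C^2$, $f(0)=0$), so the contraction principle gives a unique small $C^2$ solution $\psi=\psi(\phi,\delta)=O(\|\phi\|^3)$, $\psi(0,\delta)=0$. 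The $X_0$-projection becomes the finite-dimensional, $S^1$-equivariant bifurcation equation $\nabla_\phi\widehat{\mathcal J}(\phi,\delta)=0$ with $\widehat{\mathcal J}(\phi,\delta):=\mathcal J_{\nu_\ep}(\phi+\psi(\phi,\delta))$, whose quadratic part in $\phi$ is $\frac\delta2\langle M\phi,\phi\rangle$, $\langle M\phi,\phi\rangle=\sum_i\om_{k_i}|c_{k_i}|^2$.

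\emph{Step 3: where the hypothesis enters, and the count.} Since all $k_i$ are nonnegative, or all negative, the numbers $\om_{k_i}=\frac{2\pi}{T}(\bar r_1+k_i)$ share a common sign, hence $\langle M\,\cdot\,,\cdot\rangle$ is sign-definite on $X_0$ — this is exactly the definite transversal crossing that forces bifurcation. Assume $f'(0)\neq0$ (if $f'(0)=0$, use the leading nonvanishing derivative of $f$ at $0$ and rescale the parameter accordingly below). Choosing the sign of $\delta=\sigma\ep^2$ suitably, the rescaled functional $\ep^{-4}\widehat{\mathcal J}(\ep\eta,\sigma\ep^2)$ tends in $C^1_{\mathrm{loc}}(X_0)$ as $\ep\to0^+$ to $\Psi_0(\eta)=\frac\sigma2\langle M\eta,\eta\rangle-\frac{f'(0)}{4T}\int_0^T|\eta(z)|^4\,dz=\pm\bigl(P(\eta)-Q(\eta)\bigr)$, a positive-definite quadratic $P$ minus the positive quartic $Q(\eta)=\frac{|f'(0)|}{4T}\int_0^T|\eta|^4$. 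By homogeneity ($\nabla P$ of degree $1$, $\nabla Q$ of degree $3$) the nonzero critical orbits of $\Psi_0$ correspond bijectively to the critical orbits of the gauge-invariant function $P$ restricted to the compact level set $\{Q=1\}$; the gauge $S^1$ acts freely there with quotient $\C P^{m_1-1}$, so there are at least $\mathrm{cat}(\C P^{m_1-1})=m_1$ of them. This count is robust and survives to the full reduced functional $\widehat{\mathcal J}(\,\cdot\,,\sigma\ep^2)$ for $\ep>0$ small (equivalently, one invokes $S^1$-equivariant bifurcation from an eigenvalue of multiplicity $2m_1$ with a sign-definite crossing, the sign condition providing the linking geometry), yielding $m_1$ critical $S^1$-orbits $\phi_{j,\ep}\neq0$ with $\|\phi_{j,\ep}\|=O(\ep)$. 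Then $U_{j,\ep}:=\phi_{j,\ep}+\psi(\phi_{j,\ep},\sigma\ep^2)$ and $\nu_\ep:=\nu+\sigma\ep^2$ (nonzero for $\ep$ small, as $\nu\neq0$) give the $m_1$ branches, with $\nu_\ep\to\nu$, $U_{j,\ep}\rightrightarrows0$ on $\R$, and $U_{j,\ep}\in C^1(\R,\C)$ by Step 1.

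\emph{Main obstacle.} The delicate point is the multiplicity — producing $m_1$ bifurcating branches rather than a single one — which forces the joint use of the gauge $S^1$-symmetry and the hypothesis that $k_1,\dots,k_{m_1}$ share one sign: that hypothesis makes $\langle M\,\cdot\,,\cdot\rangle$ definite on the $2m_1$-dimensional resonant space, so the reduced problem is of ``definite quadratic $\pm$ definite higher-order'' type and the equivariant Lusternik--Schnirelmann estimate on $\C P^{m_1-1}$ applies. Installing the order-one nonlocal operator $L_\nu$ (with its genuinely indefinite symbol $\om_k(\nu+\Phi(\om_k))$) on a function space in which the nonlinearity is $C^2$ and weak solutions are classical is the other point requiring care.
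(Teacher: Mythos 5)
Your overall strategy -- Lyapunov--Schmidt onto the $2m_1$-dimensional resonant space, gauge $S^1$-equivariance, and a Lusternik--Schnirelmann count on $\C P^{m_1-1}$ -- is the ``by hand'' version of what the paper gets by citing an abstract theorem: the paper rescales the velocity ($\nu\leftrightarrow\nu/(1+\la)$) so that the small parameter multiplies the whole nonlocal-plus-nonlinear part, and then applies Mawhin--Willem's $S^1$-equivariant bifurcation theorem (their Theorem \ref{bifmaw}/Corollary \ref{bifcor1}) with $\zeta=2\nu$, $\KK=-\wt K_r\LL_r$, the sign hypothesis on $k_1,\dots,k_{m_1}$ serving exactly as in your Step 3 to make $\textrm{Hess}\,J_2(0)=\pm\zeta\I$ definite on the kernel. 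However, as written your argument has two genuine gaps. First, Step 3 leans on the nonlinearity: you need $f'(0)\ne0$ (or a finite-order, sign-definite leading term of $f$ at $0$) to produce the quartic $Q$ in the rescaled limit, and the theorem only assumes $f\in C^2(\R_+,\R)$, $f(0)=0$. Functions such as $f(w)=w^{5/2}$ or $f(w)=w^5\sin(1/w)$ are admissible but have no ``leading nonvanishing derivative'' (and in the second example no sign-definite leading behavior at all), so your fallback does not cover the stated hypotheses. The paper's route avoids this entirely because its bifurcation parameter multiplies the linear nonlocal coupling, whose Hessian $\KK$ is already definite ($\pm2\nu\I$) on the kernel; no expansion of $f$ beyond $\nabla\FF(0)=0$, $\textrm{Hess}\,\FF(0)=0$ is needed.

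Second, the persistence step is asserted rather than proved: you fix $\delta=\sigma\ep^2$ a priori and claim the $m_1$ critical orbits of the homogeneous limit $\pm(P-Q)$ ``survive'' to $\widehat{\mathcal J}(\cdot,\sigma\ep^2)$. Critical points obtained from LS category of a $C^1_{\rm loc}$ limit do not persist under perturbation in general (they need not be isolated or topologically essential one by one), and your limit-problem trick of trading critical points of $P-Q$ for critical points of $P$ on $\{Q=1\}$ uses homogeneity, which the perturbed reduced functional no longer has. The standard repair is precisely what the cited theorem does: treat the detuning as an unknown (a Lagrange-multiplier-type parameter $\la_k(\ep)$ determined together with the solution) and run the equivariant minimax on the level sets $J_2=\ep$ of the gauge/translation-invariant functional; your parenthetical appeal to ``$S^1$-equivariant bifurcation with a sign-definite crossing'' is exactly this missing ingredient, not an independent justification. (A minor further point: $L_\nu$ is not an isomorphism of $X_1$ onto $X_1^{*}$ -- its symbol grows like $|k|$, so the inverse is bounded only from $L^2$ into $X_1$; this is harmless since the nonlinearity lands in $L^2$, but the statement should be corrected. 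Your choice of the gauge action is fine, and indeed cleaner than the paper's modified translation, since its fixed-point set is genuinely $\{0\}$.)
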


\begin{remark}\label{remintr} If $a_j\ge 0$ for all $j\in\N$, then the assumptions of Theorem \ref{th2} are satisfies for any
$\nu\in\RR\Phi\setminus\{0\}$ such that $\frac{T}{2\pi}\Phi^{-1}(-\nu)\setminus\Z\ne\emptyset$, and so there are bifurcations of quasi periodic traveling waves in the generic resonant cases. On the other hand, if $\nu\in\RR\Phi\setminus\{0\}$ with $\frac{T}{2\pi}\Phi^{-1}(-\nu)\subset\Z$ then Theorem \ref{th1} is applicable for any $r\in(0,1)$.\end{remark}

Theorem \ref{th2} is a Lyapunov center theorem for traveling wave solutions. Similar results are derived in \cite{GLR} for
Fermi-Pasta-Ulam lattices.

We also discuss in Section \ref{higher} the extension of these results of \eqref{e1} on the lattices $\Z^2$ and $\Z^3$
\cite{CKMF,CCMK,GFB,LPK}. The final Section \ref{more} is devoted to traveling wave solutions of more general forms than above \cite{Pel1}.

\section{Existence of Traveling Wave Solutions}\label{trav}

In this section, we study the existence of traveling wave solutions of the form  $u_n(t)=U(n-\nu t)$, i.e. we are interested
in the equation
\begin{equation}\label{e2}
-\nu \imath U'(z)=\sum\limits_{j\in\N}a_j\partial_jU(z)+f(|U(z)|^2)U(z)\, ,
\end{equation}
where $z=n-\nu t$, $\nu\ne0$ and $\partial_jU(z):=U(z+j)+U(z-j)-2U(z)$. We are interested in the existence of quasi periodic solutions
$U(z)$ of \eqref{e2} stated in Theorem \ref{th1}.

\subsection{Preliminaries}\label{prelim}

In this subsection we recall some results from critical point theory of \cite {LS}. Let $H$ be a Hilbert space and let $J\in C^1(H,\R)$.
Suppose $H=H_1\oplus H_2$ for closed linear subspaces, and let $e_1,e_2,\cdots$ be the orthonormal basis of $H_1$.
Let us put $H_n^1:=\textrm{span}\, \{e_1,e_2,\cdots,e_n\}$ and $H_n:=H_n^1\oplus H_2$. Let $P_n$ be the orthogonal projection
of $H$ onto $H_n$. Set $J_n:=J/H_n$ - the restriction of functional $J$ on subspace $H_n$ -  and so $\nabla J_n(x)=P_n\nabla J(x)$ if $x\in H_n$.

\begin{definition}\label{def1}
If there are two positive constants $\al$ and $\beta$ such that
$$
\begin{gathered}
J(x)\ge0\quad \forall x\in \{x\in H_1\mid \|x\|\le \beta\}\, ,\\
J(x)\ge\al\quad \forall x\in \{x\in H_1\mid \|x\|=\beta\}\, ,\\
J(x)\le0\quad \forall x\in \{x\in H_2\mid \|x\|\le \beta\}\, ,\\
J(x)\le-\al\quad \forall x\in \{x\in H_2\mid \|x\|=\beta\}\, ,\end{gathered}
$$
then $J$ is said to satisfy the local linking condition at $0$.\end{definition}

\begin{definition}\label{def2}
We shall say that $J$ satisfies the Palais-Smale (PS)$^*$-condition if any sequence $\{x_n\}_{n\in\N}$ in $H$ such that
$x_n\in H_n$, $J(x_n)\le c<\infty$ and $P_n\nabla J(x_n)=\nabla J_n(x_n)\to0$ as $n\to \infty$, possesses a convergent subsequence.\end{definition}

Now we can state the following theorem of \cite{LS} which we apply.

\begin{theorem}\label{thma1}
Suppose
\begin{itemize}
\item[$(I_1)$]  $J\in C^1(H,\R)$ satisfies (PS)$^*$-condition.
\item[$(I_2)$] $J$ satisfies the local linking condition at $0$.
\item[$(I_3)$] $\forall n$, $J_n(x)\to-\infty$ as $\|x\|\to \infty$ and $x\in H_n$.
\item[$(I_4)$] $\nabla J=A+C$ for a bounded linear self-adjoint operator $A$ such that $AH_n\subset H_n$, $\forall n\in \N$ and
$C$ is a compact mapping.
\end{itemize}
Then $J$ possesses a critical point $\bar x$ with $|J(\bar x)|\ge \al$.\end{theorem}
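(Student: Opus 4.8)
The plan is to derive Theorem~\ref{thma1} by a Galerkin (finite-dimensional reduction) minimax argument in the spirit of \cite{LS}: on each $H_n$ I would produce a critical point of $J_n$ at a linking level $c_n$ lying in a fixed interval $[\al,c]$, and then pass to the limit along a subsequence by the (PS)$^*$-condition.

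\textbf{Step 1 (the linking level on $H_n$).} Fix $n$ and write $H_n=H_n^1\oplus H_2$ with $\dim H_n^1=n$. Since $H_n^1\subset H_1$, the local linking $(I_2)$ gives $J\ge\al$ on the $\be$-sphere $S_n:=\{x\in H_n^1:\|x\|=\be\}$, while $J\le0$ on the $\be$-ball of $H_2$ and $J\le-\al$ on its boundary. By $(I_3)$ there is $R_n>\be$ with $J_n<\al$ on $\{x\in H_n:\|x\|\ge R_n\}$. I would take the box
\[
Q_n:=\{y+z:\ y\in H_n^1,\ z\in H_2,\ \|y\|\le R_n,\ \|z\|\le R_n\},
\]
so that, since $\|y+z\|^2=\|y\|^2+\|z\|^2\ge R_n^2$ on $\partial Q_n$, one has $J<\al$ on $\partial Q_n$; and I would show that $S_n$ links $\partial Q_n$ inside $H_n$ by a degree argument --- Brouwer for the $H_n^1$-component, Leray--Schauder for the $H_2$-component of admissible deformations, the latter being legitimate because $(I_4)$ makes those deformations compact perturbations of the identity on $H_n$. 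Putting $\Gamma_n:=\{h\in C(Q_n,H_n):h|_{\partial Q_n}=\mathrm{id}\}$, this yields
\[
c_n:=\inf_{h\in\Gamma_n}\ \sup_{x\in Q_n}J(h(x))\ \in\ [\al,c],
\]
the upper bound being finite because, by $(I_4)$, $J$ is bounded on bounded sets.

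\textbf{Step 2 (a critical point of $J_n$, then the limit).} By $(I_3)$, $J_n$ is anticoercive on $H_n$, so its (PS)-sequences are bounded; and by $(I_4)$, $\nabla J_n(x)=Ax+P_nCx$ for $x\in H_n$ (using $AH_n\subset H_n$) with $A$ bounded self-adjoint and $C$ compact, so along a bounded (PS)-sequence the $C$-term converges up to a subsequence and, in the setting of \cite{LS}, the self-adjoint part $A$ upgrades this to norm convergence of the sequence itself. Thus $J_n$ satisfies (PS) on $H_n$, and the deformation lemma together with the linking of Step~1 produces $x_n\in H_n$ with $J_n(x_n)=c_n$ and $\nabla J_n(x_n)=0$. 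The sequence $\{x_n\}$ then satisfies $x_n\in H_n$, $\al\le J(x_n)=c_n\le c$, and $P_n\nabla J(x_n)=\nabla J_n(x_n)=0$, so it is a (PS)$^*$-sequence and, by $(I_1)$, $x_n\to\bar x$ along a subsequence. Letting $n\to\infty$ in $0=\nabla J_n(x_n)=Ax_n+P_nCx_n$ --- $A$ bounded, $C$ compact, $P_n\to I$ strongly --- gives $\nabla J(\bar x)=A\bar x+C\bar x=0$, while $J(\bar x)=\lim_n c_n\in[\al,c]$ gives $|J(\bar x)|\ge\al$; since both relevant inequalities of $(I_2)$ hold at $0$, $J(0)=0$, hence $\bar x\ne0$.

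\textbf{The hard part.} The real difficulty is carrying out Steps~1--2 in the \emph{mixed} finite/infinite dimensional setting: inside each $H_n$ the summand $H_2$ is typically infinite dimensional, so neither the topological linking nor the deformation lemma is available off the shelf. Hypothesis $(I_4)$ is precisely what rescues both --- it lets one realise the linking through Leray--Schauder degree, run the pseudo-gradient flow within $H_n$, and it forces $J$ to be bounded on bounded sets. The single estimate I expect to be the bottleneck is the $n$-uniform upper bound $c_n\le c$: without it the limiting sequence $\{x_n\}$ would violate the energy bound built into the (PS)$^*$-condition and the whole limiting argument would collapse. Getting it means choosing the linking region of Step~1 --- exploiting the anticoercivity supplied by $(I_3)$ --- so that the supremum defining $c_n$ is controlled independently of $n$.
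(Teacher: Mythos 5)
First, a point of reference: the paper does not prove Theorem \ref{thma1} at all --- it is quoted verbatim from \cite{LS} and then applied --- so the benchmark here is Li and Szulkin's proof, not anything in this paper. Measured against what such a proof must accomplish, your scheme has a genuine gap, and it sits exactly where you yourself placed the ``bottleneck''. Your linking geometry is degenerate: $Q_n$ is a bounded convex neighbourhood of $0$ in $H_n$ and your admissible maps fix all of $\partial Q_n$. If, as you must, you restrict $\Gamma_n$ to compact perturbations of the identity so that Leray--Schauder degree applies, then for every admissible $h$ and every $p\in \mathrm{int}\,Q_n$ one has $\deg(h,\mathrm{int}\,Q_n,p)=\deg(\mathrm{id},\mathrm{int}\,Q_n,p)=1$, hence $h(Q_n)\supset \mathrm{int}\,Q_n$ and $\sup_{Q_n}J(h(\cdot))\ge\sup_{Q_n}J$; combined with $h=\mathrm{id}$ this forces $c_n=\sup_{Q_n}J$. (If instead you keep $\Gamma_n$ as all continuous maps fixing $\partial Q_n$, the lower bound $c_n\ge\alpha$ fails, since with $H_2$ infinite dimensional the ball can be retracted onto its boundary.) But $R_n$ is produced by $(I_3)$, which is a statement for each fixed $n$ with no uniformity whatsoever, so $\sup_{Q_n}J$ --- and hence $c_n$ --- has no bound independent of $n$; nothing in $(I_1)$--$(I_4)$ supplies one. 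Since the (PS)$^*$-condition only applies to sequences with $J(x_n)\le c<\infty$, the limiting step collapses. The uniform level estimate is not a technical afterthought: it is the main content of the theorem, and any workable minimax must be built from test sets whose size is tied to data uniform in $n$ (the local-linking radius $\beta$ and the bound of $J$ on a fixed ball, using that $(I_4)$ makes $J$ bounded on bounded sets), together with deformations adapted to the splitting $H_n^1\oplus H_2$; the ``small sphere inside a big box'' construction discards the saddle structure of the local linking and cannot deliver it.

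Two further steps would fail as written. First, your claim that $J_n$ satisfies (PS) on $H_n$ because ``the self-adjoint part $A$ upgrades this to norm convergence'' is unjustified: $A$ is only bounded and self-adjoint, not invertible, so from boundedness of $(x_k)$ and $Ax_k+P_nCx_k\to0$ you get no convergence at all (take $A=0$, $\nabla J=C$ compact). The standard repair --- which also shows exact critical points of $J_n$ are unnecessary --- is to use a quantitative deformation lemma (or Ekeland's principle) to produce $x_n\in H_n$ with $|J(x_n)-c_n|\le 1/n$ and $\|\nabla J_n(x_n)\|\le 1/n$, and to invoke (PS)$^*$ only for this approximate sequence; your final limiting argument ($P_n\to I$ strongly, $C$ compact, hence $\nabla J(\bar x)=0$ and $|J(\bar x)|\ge\alpha$, so $\bar x\ne0$) is fine once such a sequence with uniformly bounded levels exists. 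Second, even for fixed $n$ your degenerate level $c_n=\sup_{Q_n}J$ need not be attained, since $Q_n$ is not compact, so without the corrected minimax and deformation machinery Step~2 does not start. In short: the structure (Galerkin approximation, linking levels in $[\alpha,c]$, approximate critical points, (PS)$^*$ limit) is the right skeleton and is indeed the spirit of \cite{LS}, but the one estimate you deferred is the theorem, and your choice of linking sets makes it unprovable.
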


\begin{remark}\label{rem1} If $0$ is an indefinite nondegenerate critical point of $J$, then $J$ satisfies the local linking
condition at $0$.\end{remark}

\subsection{Proof of Theorem \ref{th1}}

In this section, we use Theorem \ref{thma1} to prove Theorem \ref{th1}. Without loss of generality, we set $T=2\pi$.
We suppose $\nu>0$, the case $\nu<0$ can be handled similarly. First, we identify $\C$ with $\R^2$ in this section.
Let $r\in(0,1)$ be fixed. Next, we consider real Banach spaces
$$
L_r^{\wt s}:=\left\{U\in L^{\wt s}_{loc}(\R,\C)\mid U(z+2\pi)=\eu^{2\pi r\imath}U(z),
\, \forall z\in \R\right\}
$$
for $\wt s\ge 1$. Clearly $U\in L_r^{\wt s}$ if and only if $U(z)=\eu^{rz\imath}V(z)$ for some $V\in L^{\wt s}:=L^{\wt s}(S^{2\pi},\C)$.
Consequently $U_1(z+c_1)\overline{U_2(z+c_2)}$ is $2\pi$-periodic for any $c_1,c_2\in\R$ and $U_1,U_2\in L_r^{\wt s}$,
hence $|U(z)|$ is $2\pi$-periodic. So we consider the norm on
$L_r^{\wt s}$ like on $L^{\wt s}$. In particular, we have
$$
V\in L_r^{2}\Leftrightarrow V(z)=\sum\limits_{k\in \Z}V_k\eu^{(r+k)z\imath},\, V_k\in\C,\, \sum\limits_{k\in \Z}|V_k|^2<\infty\, .
$$
Let
$$
\begin{gathered}
X_r:=W_r^{1/2,2}(S^{2\pi},\C)=\left\{V\in L^{2}_r\mid V(z)=\sum\limits_{k\in \Z}V_k\eu^{(r+k)z\imath},\, \sum\limits_{k\in \Z}|V_k|^2|r+k|<\infty\right\}\, ,\\
Y_r:=W_r^{1,2}(S^{2\pi},\C)=\left\{V\in L^{2}_r\mid V(z)=\sum\limits_{k\in \Z}V_k\eu^{(r+k)z\imath},\, \sum\limits_{k\in \Z}|V_k|^2(r+k)^2<\infty\right\}\, .\end{gathered}
$$
Note $r+k\ne 0$ for any $k\in\Z$. Clearly $Y_r\subset X_r\subset L_r^2$. We consider $L_r^2$, $X_r$ and $Y_r$ as real Hilbert spaces with inner products
$$
\begin{gathered}
\langle V,W\rangle_{L_r^2}:=2\pi\Re\sum\limits_{k\in \Z}V_k\ol{W_k}=\Re \int_0^{2\pi}V(z)\ol{W(z)}dz\, ,\\
\langle V,W\rangle_{X_r}:=2\pi\Re\sum\limits_{k\in \Z}V_k\ol{W_k}|r+k|\\
\langle V,W\rangle_{Y_r}:=2\pi\Re\sum\limits_{k\in \Z}V_k\ol{W_k}(r+k)^2\end{gathered}
$$
for $V(z)=\sum\limits_{k\in \Z}V_k\eu^{(r+k)z\imath}$ and $W(z)=\sum\limits_{k\in \Z}W_k\eu^{(r+k)z\imath}$.

Clearly $\|U\|_{L^2}=\|U\|_{L_r^2}\le r_1\|U\|_{X_r}$, $\forall U\in X_r$ and  $\|U\|_{X_r}\le r_1\|U\|_{Y_r}=r_1\|U'\|_{L^2}$,
$\forall U\in Y_r$ for $r_1:=\min\left\{\sqrt{r},\sqrt{1-r}\right\}$. The following result is well-known \cite{LS,Ra1}.

\begin{lemma}\label{lem1}
For each $\wt s\ge 1$, $X_r$ is compactly embedded into $L_r^{\wt s}$.
\end{lemma}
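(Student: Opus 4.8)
The plan is to exhibit the inclusion $\iota\colon X_r\hookrightarrow L_r^{\wt s}$ as a limit, in the operator norm of $\LL(X_r,L_r^{\wt s})$, of finite-rank operators, and then to use that an operator-norm limit of compact operators is compact. The approximants will be the Fourier truncations: for $U(z)=\sum_{k\in\Z}V_k\eu^{(r+k)z\imath}\in X_r$ and $N\in\N$, put $P_NU:=\sum_{|k|\le N}V_k\eu^{(r+k)z\imath}$. Each $\eu^{(r+k)z\imath}$ is $\eu^{rz\imath}$ times a bounded $2\pi$-periodic function and hence lies in $L_r^{\wt s}$, so $P_N$, regarded as a map $X_r\to L_r^{\wt s}$, has finite-dimensional range and is compact. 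Everything then reduces to showing that the high-frequency tails are uniformly small on the unit ball of $X_r$, i.e.\ to finding $\ep_N\to 0$ with $\|U-P_NU\|_{L_r^{\wt s}}\le\ep_N\|U\|_{X_r}$ for every $U\in X_r$.

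I would distinguish two ranges of $\wt s$. If $1\le\wt s\le 2$, then since $S^{2\pi}$ has finite measure we have $\|\cdot\|_{L_r^{\wt s}}\le(2\pi)^{1/\wt s-1/2}\|\cdot\|_{L_r^2}$, so it is enough to treat $\wt s=2$, and there Parseval together with the definition of $\|\cdot\|_{X_r}$ gives
$$
\|U-P_NU\|_{L_r^2}^2=2\pi\sum_{|k|>N}|V_k|^2\le\frac1N\,2\pi\sum_{|k|>N}|V_k|^2|r+k|\le\frac1N\|U\|_{X_r}^2,
$$
because $|r+k|>N$ whenever $|k|>N$; thus $\ep_N=N^{-1/2}$. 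If $\wt s>2$, write $U=\eu^{rz\imath}V$ with $V$ a $2\pi$-periodic function having Fourier coefficients $(V_k)$, and set $\wt s':=\wt s/(\wt s-1)\in(1,2)$. The Hausdorff--Young inequality on the circle gives $\|U-P_NU\|_{L_r^{\wt s}}\le C_{\wt s}\big(\sum_{|k|>N}|V_k|^{\wt s'}\big)^{1/\wt s'}$, and applying Hölder with exponents $2/\wt s'$ and $2/(2-\wt s')$ to the splitting $|V_k|^{\wt s'}=\big(|V_k|^2|r+k|\big)^{\wt s'/2}|r+k|^{-\wt s'/2}$ yields, with $\beta:=\wt s'/(2-\wt s')>1$,
$$
\sum_{|k|>N}|V_k|^{\wt s'}\le\Big(\sum_{|k|>N}|V_k|^2|r+k|\Big)^{\wt s'/2}\Big(\sum_{|k|>N}|r+k|^{-\beta}\Big)^{1-\wt s'/2}\le\Big(\tfrac1{2\pi}\|U\|_{X_r}^2\Big)^{\wt s'/2}\big(C_\beta N^{1-\beta}\big)^{1-\wt s'/2},
$$
the last step because $\sum_{|k|>N}|r+k|^{-\beta}\le C_\beta N^{1-\beta}$ by comparison with an integral. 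Hence $\|U-P_NU\|_{L_r^{\wt s}}\le C\,N^{(1-\beta)(1/\wt s'-1/2)}\|U\|_{X_r}$ with a strictly negative exponent, which is what is needed.

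The one genuinely delicate point is that $W^{1/2,2}$ in one space dimension is the \emph{critical} Sobolev case: there is no uniform gain of integrability into $L^\infty$, and the Hausdorff--Young route is available only for $\wt s\ge 2$ --- whence the split above, the range $\wt s<2$ being absorbed by the finiteness of the measure. A fully self-contained and essentially equivalent alternative is a Littlewood--Paley/Bernstein argument: decompose $U-P_NU$ into dyadic frequency blocks $Q_jU$ (only scales $2^j\ge cN$ contribute), use the elementary Bernstein bound $\|Q_jU\|_{L_r^\infty}\le C2^{j/2}\|Q_jU\|_{L_r^2}$ (Cauchy--Schwarz over the $O(2^j)$ active modes), interpolate $\|Q_jU\|_{L_r^{\wt s}}\le\|Q_jU\|_{L_r^2}^{2/\wt s}\|Q_jU\|_{L_r^\infty}^{1-2/\wt s}$, note $\|Q_jU\|_{L_r^2}\le2^{-j/2}\|Q_jU\|_{X_r}$, and sum over $j$ with Cauchy--Schwarz using orthogonality of the blocks in $X_r$, arriving at $\|U-P_NU\|_{L_r^{\wt s}}\le C N^{-1/\wt s}\|U\|_{X_r}$. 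Alternatively, one may simply quote the fractional Sobolev embedding $W^{1/2,2}(S^{2\pi})\hookrightarrow L^{\wt s}(S^{2\pi})$ and the Rellich--Kondrachov compactness theorem, in the spirit of \cite{LS,Ra1}.
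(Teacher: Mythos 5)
Your argument is correct: the truncations $P_N$ are finite rank, the case $1\le\wt s\le 2$ reduces to $\wt s=2$ by finiteness of the measure, the $\wt s=2$ tail bound uses $|r+k|>N$ for $|k|>N$ correctly, and for $\wt s>2$ the Hausdorff--Young plus H\"older step (with conjugate exponents $2/\wt s'$ and $2/(2-\wt s')$, and $\beta=\wt s'/(2-\wt s')>1$ so that the tail of $\sum|r+k|^{-\beta}$ decays like $N^{1-\beta}$) gives a uniform tail estimate with a strictly negative power of $N$; hence the embedding is an operator-norm limit of compact operators. The comparison with the paper is simply this: the paper offers no proof at all, but states the lemma as ``well-known'' with a pointer to \cite{LS,Ra1}, where the corresponding compact embedding of the half-integer Sobolev space of loops (the space $E=W^{1/2,2}$ of Rabinowitz's minimax framework, transported to the twisted setting via $U(z)=\eu^{rz\imath}V(z)$) is established; your closing remark anticipates exactly this citation route. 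What your write-up buys over the paper's is a self-contained Fourier-side proof with an explicit approximation rate $\ep_N$ (useful, e.g., if one wanted quantitative control of the Nemytskij terms), at the cost of invoking Hausdorff--Young; the paper's choice buys brevity and defers to the standard reference, which proves the same statement by essentially the same low/high frequency splitting. One stylistic caution: since the relevant estimates are applied to the periodic function $V=\eu^{-rz\imath}U$, it is worth saying explicitly (as you implicitly do) that $|U|=|V|$ pointwise, so all $L^{\wt s}$ norms coincide and Hausdorff--Young on $S^{2\pi}$ applies verbatim; apart from such bookkeeping the proof is complete.
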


On $X_r$, we consider a continuous symmetric bilinear form
$$
B_r(U,V):=4\pi\Re\sum\limits_{k\in \Z}V_k\ol{W_k}(r+k)\, .
$$
Note, if $U\in X_r$ and $V\in Y_r$, then
$$
2\Re \int_0^{2\pi}\imath U(z)\ol{V(z)}\, 'dz=B_r(U,V)\, .
$$
Now we consider a real functional
$$
\begin{gathered}
I_r(U):=\frac{\nu}{2}B_r(U,U)+\int_0^{2\pi}\left\{\sum\limits_{j\in\Z}\frac{a_{|j|}}{2}|U(z+j)-U(z)|^2-F(|U(z)|^2)\right\}dz\\
=\frac{\nu}{2}B_r(U,U)+\int_0^{2\pi}\left\{\sum\limits_{j\in\N}a_j|U(z+j)-U(z)|^2-F(|U(z)|^2)\right\}dz
\end{gathered}
$$
on $X_r$. Then $I_r\in C^1(X_r,\R)$ and for $U\in X_r$, $V\in Y_r$, we derive
$$
\begin{gathered}
DI_r(U)V=\\ 2\Re\left\{\int_0^{2\pi}\left(\nu\imath U(z)\ol{V(z)}\, '-\left(\sum\limits_{j\in\N}a_j\partial_jU(z)+f(|U(z)|^2)U(z)\right)\ol{V(z)}\right)dz\right\}.
\end{gathered}
$$
If $U\in X_r$ is a critical point of $I_r$ then
\begin{equation}\label{e4}
\Re\left\{\int_0^{2\pi}\left(\nu\imath U(z)\ol{V(z)}\, '-\left(\sum\limits_{j\in\N}a_j\partial_jU(z)+f(|U(z)|^2)U(z)\right)\ol{V(z)}\right)dz\right\}=0
\end{equation}
for any $V\in Y_r$. Replacing $V$ with $\imath V$ in \eqref{e4}, we obtain
$$
\int_0^{2\pi}\left(\nu\imath U(z)\ol{V(z)}\, '-\left(\sum\limits_{j\in\N}a_j\partial_jU(z)+f(|U(z)|^2)U(z)\right)\ol{V(z)}\right)dz=0
$$
for any $V\in Y_r$. This means that $U$ is a weak solution of \eqref{e2}. Then a standard regularity method shows \cite{Ra1} that $U$ is a
$C^1$-smooth solution of \eqref{e2}.

Now we split $X_r=X_+\oplus X_-$ for
$$
X_-:=\left\{V(z)=\sum\limits_{k=-\infty}^{-1}V_k\eu^{(r+k)z\imath}\right\},\quad
X_+:=\left\{V(z)=\sum\limits_{k=0}^{\infty}V_k\eu^{(r+k)z\imath}\right\}\, .
$$
Clearly if $U=U_++U_-$ then $B_r(U,U)=2\left(\|U_+\|^2_{X_r}-\|U_-\|^2_{X_r}\right)$.

Next, let us define $\wt K_r : L^2_r\to X_r$ as
\begin{equation}\label{e3}
\langle \wt K_rH,V\rangle_{X_r}:=2\Re \int_0^{2\pi}H(z)\ol{V(z)}dz,\, \forall V\in X_r\, .
\end{equation}
Then
$$
\wt K_rH=\sum_{k\in \Z}\frac{2H_k}{|r+k|}\eu^{(r+k)\imath z}
$$
and so $\wt K_r$ is compact. To study $\nabla I_r(u)$, we introduce the mapping $\Psi_r : X_r\to X_r$ defined by
$$
\langle \Psi_r(U),V\rangle _{X_r}:=2\Re \int_0^{2\pi}f(|U(z)|^2)U(z)\ol{V(z)}dz
$$
for any $V\in X_r$. By Lemma \ref{lem1}, the Nemytskij operator $U\to f(|U(z)|^2)U(z)$ from $X_r$ to $L_r^2$ is continuous.
Using \eqref{e3}, we get
$$
\Psi_r(U)=\wt K_rf(|U|^2)U\, .
$$
Hence $\Psi_r : X_r\to X_r$ is compact and continuous.

\begin{lemma}\label{lem2}
Under (H1) it hods $D\Psi_r(0)=0$.
\end{lemma}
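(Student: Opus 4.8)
The plan is to show that $\Psi_r$ is Fr\'echet differentiable at $0$ with vanishing derivative, i.e.\ that $\Psi_r(U)=o(\|U\|_{X_r})$ as $\|U\|_{X_r}\to 0_+$; since $f(0)=0$ gives $\Psi_r(0)=0$, this is exactly the assertion $D\Psi_r(0)=0$. First I would record that $\wt K_r\colon L^2_r\to X_r$ is bounded: from the explicit formula $\wt K_rH=\sum_{k\in\Z}\frac{2H_k}{|r+k|}\eu^{(r+k)\imath z}$ and the estimate $|r+k|\ge r_1^2>0$ one gets $\|\wt K_rH\|_{X_r}\le\frac{2}{r_1}\|H\|_{L^2_r}$. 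Consequently $\|\Psi_r(U)\|_{X_r}=\|\wt K_r(f(|U|^2)U)\|_{X_r}\le\frac{2}{r_1}\|f(|U|^2)U\|_{L^2_r}$, so it remains to bound the right-hand side by a superlinear power of $\|U\|_{X_r}$.

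The key pointwise estimate comes from (H1). The condition $\limsup_{w\to 0_+}f(w)/w^{\wt s}<\infty$ provides $\delta>0$ and $C_0>0$ with $|f(w)|\le C_0w^{\wt s}$ for $0\le w\le\delta$, while for $w>\delta$ the growth bound gives $|f(w)|\le c_1(w^s+1)\le c_1(1+\delta^{-s})w^s$; hence $|f(w)|\le C(w^{\wt s}+w^s)$ for all $w\ge 0$ with $C:=\max\{C_0,c_1(1+\delta^{-s})\}$. Substituting $w=|U(z)|^2$ yields $|f(|U(z)|^2)U(z)|\le C\big(|U(z)|^{2\wt s+1}+|U(z)|^{2s+1}\big)$, so
$$
\|f(|U|^2)U\|_{L^2_r}^2\le 2C^2\Big(\|U\|_{L^{4\wt s+2}_r}^{4\wt s+2}+\|U\|_{L^{4s+2}_r}^{4s+2}\Big).
$$

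Since $\wt s>0$ and $s>0$, the exponents $4\wt s+2$ and $4s+2$ both exceed $1$, so Lemma \ref{lem1} furnishes constants $c_{\wt s},c_s>0$ with $\|U\|_{L^{4\wt s+2}_r}\le c_{\wt s}\|U\|_{X_r}$ and $\|U\|_{L^{4s+2}_r}\le c_s\|U\|_{X_r}$. Combining the last three displays and using $\sqrt{a+b}\le\sqrt a+\sqrt b$, we obtain a constant $c_0>0$ with
$$
\|\Psi_r(U)\|_{X_r}\le c_0\big(\|U\|_{X_r}^{2\wt s+1}+\|U\|_{X_r}^{2s+1}\big),
$$
so that $\|\Psi_r(U)\|_{X_r}/\|U\|_{X_r}\le c_0\big(\|U\|_{X_r}^{2\wt s}+\|U\|_{X_r}^{2s}\big)\to 0$ as $\|U\|_{X_r}\to 0_+$, which gives $D\Psi_r(0)=0$. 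There is no deep obstacle here; the only point requiring care is that elements of $X_r=W^{1/2,2}$ need not be bounded, so the local behaviour of $f$ near $0$ by itself is insufficient and must be merged with the global growth bound of (H1) as above. Once this is done, the strict superlinearity of both resulting powers of $\|U\|_{X_r}$ forces the difference quotient to vanish.
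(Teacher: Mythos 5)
Your proof is correct and follows essentially the same route as the paper: estimate $\|f(|U|^2)U\|_{L_r^2}$ via the growth bounds in (H1) together with the embedding of Lemma \ref{lem1}, transfer to $X_r$ through $\wt K_r$ (the paper does this by Cauchy--Schwarz on $\langle\Psi_r(U),V\rangle_{X_r}$, you by the explicit Fourier bound on $\wt K_r$ — equivalent), and conclude $\|\Psi_r(U)\|_{X_r}=O\bigl(\|U\|_{X_r}^{2\wt s+1}+\|U\|_{X_r}^{2s+1}\bigr)=o(\|U\|_{X_r})$. If anything, your pointwise bound $|f(w)|\le C(w^{\wt s}+w^s)$ is slightly more careful than the paper's $|f(w)|\le c_3(w+w^s)$, which tacitly assumes $\wt s\ge 1$, whereas your exponents $2\wt s+1,\,2s+1>1$ work for every $\wt s>0$ allowed by (H1).
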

\begin{proof} There is a constant $c_3$ such that
$$
|f(w)|\le c_3(w+w^s)
$$
for any $w\ge 0$. Then by Lemma \ref{lem1}, we derive
$$
\begin{gathered}
|f(|U|^2)U|_{L_r^2}^2=\int_0^{2\pi}f(|U(z)|^2)^2|U(z)|^2dz\\ \le 2c^2_3\int_0^{2\pi}\left(|U(z)|^6+|U(z)|^{2(2s+1)}\right)dz\le c^2_4\left(\|U\|^3_{X_r}+\|U\|^{2s+1}_{X_r}\right)^2\end{gathered}
$$
for a constant $c_4>0$. Hence
$$
\left|\langle \Psi_r(U),V\rangle_{X_r}\right|\le 2\|f(|U|^2)U\|_{L_r^2}\|V\|_{L_r^2}\le c_5\left(\|U\|^3_{X_r}+\|U\|^{2s+1}_{X_r}\right)\|V\|_{X_r}
$$
for a constant $c_5>0$. This implies
$$
\|\Psi_r(U)\|_{X_r}\le c_5\left(\|U\|^3_{X_r}+\|U\|^{2s+1}_{X_r}\right),\, \forall U\in X_r\, .
$$
Since $\Psi_r(0)=0$ and $s>0$, we get $D\Psi_r(0)=0$. The proof is finished.\end{proof}

Finally, define $\LL_r : L_r^2\to L_r^2$ as
$$
\LL_rU:=\sum_{j\in \N}a_j\partial_j U(z)\, .
$$
Then
\begin{equation}\label{ea1}
\nabla I_r(U)=\left(2\nu \I_+-2\nu \I_--\wt K_r\LL_r-\Psi_r\right)(U)
\end{equation}
for the identities $\I_\pm : X_\pm\to X_\pm$. Clearly
$$
A_r:=2\nu \I_+-2\nu \I_--\wt K_r\LL_r
$$
is a self-adjoint bounded operator $A_r : X_r\to X_r$ satisfying
$$
A_rU=2\sum_{k\in \Z}\left (\nu\, \textrm{sgn}\, (r+k)+\frac{4}{|r+k|}\sum_{j\in \N}a_j\sin^2\left[\frac{r+k}{2}j\right]\right)U_k\eu^{(r+k)\imath z}\, .
$$
Consequently, the spectrum $\sigma(A_r)$ of $A_r$ is given by
$$
\sigma(A_r)=\left \{2\, \textrm{sgn}\, (r+k)\left(\nu+\Phi(r+k)\right)\mid k\in \Z\right\}\, .
$$
By Lemma \ref{lem2}, we get that under the assumption
\begin{equation}\label{nonres}
-\nu\ne\Phi(r+k)\, \forall k\in\Z\, ,
\end{equation}
$0$ is an indefinite nondegenerate critical point of $I_r$: $\nabla I_r(0)=0$ and $\textrm{Hess}\, I_r(0)=A_r$ with $0\notin \sigma (A_r)$
and $X_r=X_{1,r}\oplus X_{2,r}$ with $\sigma (A_r/X_{1,r})\subset (0,\infty)$ and $\sigma(A_r/X_{2,r})\subset(-\infty,0)$ where
$X_{1,r}$, $X_{2,r}$ are suitable closed linear subspaces of $X_r$. Note $X_{1,r}$ and $X_{2,r}$ are infinite dimensional,
since $\Phi(r+k)\to 0$ as $|k|\to \infty$. Consequently by Remark \ref{rem1}, under \eqref{nonres}, $I_r$ satisfies the local
linking condition at $0$ in the sense of Definition \ref{def1}, i.e. condition $(I_2)$ of Theorem \ref{thma1} is verified.

We consider an equivalent scalar product $\langle\cdot,\cdot\rangle_r$ on $X_r$ such that
$$
\langle A_rU,U\rangle_r=\|U_1\|_r^2-\|U_2\|_r^2,\quad U_1\in X_{1,r},\, U_2\in X_{2,r}\, .
$$
Note there is a linear isomorphism $K_r : X_r\to X_r$ such that
$$
\langle U,V\rangle_{X_r}=\langle K_rU,V\rangle_r,\quad \forall U,\forall V\in X_{r}\, .
$$
Clearly $K_r$ is self-adjoint and positive definite. Then
$$
\begin{gathered}
I_r(U)=\frac{\nu}{2}\|U_1\|_r^2-\frac{\nu}{2}\|U_2\|_r^2-\int_0^{2\pi}F(|U(z)|^2)dz\, ,\\
\nabla I_r(U)=\nu \I_1-\nu \I_2-K_r\Psi_r\, ,\\
\langle \nabla I_r(U),V\rangle _r=DI_r(U)V=\nu\|V_1\|_r^2-\nu\|V_2\|_r^2\\ -2\Re\int_0^{2\pi}f(|U(z)|^2)U(z)\ol{V(z)}dz\, .\end{gathered}
$$
Let $X_{2,r}=\textrm{span}\, \{e_1,e_2,\cdots\}$ and $e_i$ are eigenvectors of $A_r$. Then we take
$X_n=\textrm{span}\, \{e_1,e_2,\cdots,e_n\}\oplus X_{1,r}$ for $n\ge 3$. So clearly $AX_n\subset X_n$,
i.e. condition $(I_4)$ of Theorem \ref{thma1} is verified. Let $P_n : X_r\to X_n$ be the orthogonal
projection with respect $\langle \cdot,\cdot\rangle_r$.

We suppose there is a sequence $\{U_m\}_{m\in\N}\subset X_r$, $U_m\in X_m$ and a constant $c$ such that
$$
I_r(U_m)\le c\quad\textrm{and}\quad P_m\nabla I_r(U_m)\to 0\, .
$$
Then for $m$ large we get,
\begin{equation}\label{ee1}
\begin{gathered}
c+\|U_m\|_r\ge I_r(U_m)-\frac{1}{2}\langle P_m\nabla I_r(U_m),U_m\rangle_r\\
=\int_0^{2\pi}\left[f(|U_m(z)|^2)|U_m(z)|^2-F(|U_m(z)|^2)\right]dz\\
\ge \int_0^{2\pi}(\mu-1)F(|U_m(z)|^2)dz-2\pi\bar r\\ \ge (\mu-1)c_2\int_0^{2\pi}\left(|U_m(z)|^{2(s+1)}-1\right)dz-2\pi\bar r\\
\ge (\mu-1)c_2\left(\|U_m\|^{2(s+1)}_{L^{2(s+1)}}-c_6\right)\end{gathered}
\end{equation}
for a constant $c_6>0$.

By following the same arguments, we derive
$$
\begin{gathered}
\nu \|U_{1,m}\|^2_r\le \|P_m\nabla I_m(U_m)\|\cdot\|U_{1,m}\|_r+2\int_0^{2\pi}f(|U_m(z)|^2)|U_m(z)||U_{1,m}(z)|dz\\
\le\|U_{1,m}\|_r+2c_7\int_0^{2\pi}\left(|U_m(z)|^{2s+1}+1\right)|U_{1,m}(z)|dz\\
\le\|U_{1,m}\|_r+2c_7\left\| |U_m|^{2s+1}+1\right\|_{L^{\frac{2(s+1)}{2s+1}}}\|U_{1,m}\|_{L^{2(s+1)}}\\
\le\|U_{1,m}\|_r+2c_7\left(\|U_m\|^{2s+1}_{L^{2(s+1)}}+1\right)\|U_{1,m}\|_r
\end{gathered}
$$
and hence
$$
\|U_{1,m}\|_r\le c_8\left(\|U_m\|^{2s+1}_{L^{2(s+1)}}+1\right)\, .
$$
Similarly we obtain
$$
\|U_{2,m}\|_r\le c_8\left(\|U_m\|^{2s+1}_{L^{2(s+1)}}+1\right)
$$
and consequently by \eqref{ee1}, we obtain
$$
\|U_{m}\|_r\le 2c_{8}\left(\|U_m\|^{2s+1}_{L^{2(s+1)}}+1\right)\le c_{9}\left(\|U_m\|^{\frac{2s+1}{2(s+1)}}_r+1\right)
$$
for positive constants $c_7$, $c_8$ and $c_{9}$. Thus $\{U_m\}_{m\in\N}\subset X_r$ is bounded. Since
$$
P_m\nabla I_r(U_m)=\nu U_{1,m}-\nu U_{2,m}-K_r\Psi_r(U_m)\to 0
$$
and $K_r\Psi_r$ is compact, there is a convergent subsequence of $\{U_m\}_{m\in\N}$ in $X_r$. Summarizing, (PS)$^*$-condition is verified for $I_{r}$, i.e. condition $(I_1)$ of Theorem \ref{thma1} is verified.

Next, let $U\in X_n$. Then using $U_1\in \textrm{span}\, \{e_1,e_2,\cdots,e_n\}$, we derive
$$
\begin{gathered}
I_r(U)=\frac{\nu}{2}\left(\|U_1\|_r^2-\|U_2\|_r^2\right)-\int_0^{2\pi}F(|U(z)|^2)dz\\
\le\frac{\nu}{2}\left(\|U_1\|_r^2-\|U_2\|_r^2\right)-c_{2}\int_0^{2\pi}\left(|U(z)|^{2(s+1)}-1\right)dz\\
\le\frac{\nu}{2}\left(\|U_1\|_r^2-\|U_2\|_r^2\right)-c_{2}\|U\|^{2(s+1)}_{L^{2(s+1)}}+c_{10}\\
\le\frac{\nu}{2}\left(\|U_1\|_r^2-\|U_2\|_r^2\right)-c_{11}\left(\|U_1\|^{2(s+1)}_{L^{2}}+\|U_2\|^{2(s+1)}_{L^{2}}\right)+c_{10}\\
\le \frac{\nu}{2}\|U_1\|_r^2\left(1-c_{12}\|U_1\|^{2s}_r\right)-\frac{\nu}{2}\|U_2\|_r^2+c_{10}
\end{gathered}
$$
for positive constants $c_{10}$, $c_{11}$ and $c_{12}$. Now it is clear that $I_r(U)\to -\infty$ as $\|U\|_r\to\infty$,
i.e. condition $(I_3)$ of Theorem \ref{thma1} is verified.

Summarizing, under assumptions (H1) and \eqref{nonres}, all conditions $(I_1)$-$(I_4)$ of Theorem \ref{thma1}
are verified for $I_r$. Hence there is a nonzero critical point $U_r\in X_r$ of $I_r$, which we already know to be a
$C^1$-smooth solution of \eqref{e2} satisfying \eqref{prop}. Note \eqref{nonres} certainly holds for any $|\nu|>\bar R$ and
$r\in(0,1)$. Hence the proof of the second part of Theorem \ref{th1} is finished. To prove the first part, it enough to observe that the set
$$
\left\{\Phi(r+k)\mid r\in \Q\cap (0,1), \quad k\in\Z\right\}
$$
is countable, and thus for almost each $\nu\in\R\setminus\{0\}$ and any $r\in \Q\cap (0,1)$, condition \eqref{nonres} holds.

\subsection{Remarks}\label{remarks}

\begin{remark}\label{rem2} When $r$ is rational in Theorem \ref{th1} then we get periodic $U(z)$ with arbitrarily large
minimal periods. If $r$ is irrational then clearly $U(z)=\eu^{\frac{2\pi}{T}rz\imath}V(z)$ for a
$T$-periodic $V(z)=U(z)\eu^{-\frac{2\pi}{T}rz\imath}$. So $U(z)$ is quasi periodic and its orbit in $\C\simeq \R^2$ is dense either
in a compact annulus or in a compact disc. But $|U(z)|$ is $T$-periodic in the both cases.\end{remark}

\begin{remark}\label{rem3} Changing $t\leftrightarrow -t$, we can also handle DNLS
\begin{equation}\label{re1}
-\imath \dot u_n=\sum\limits_{j\in\N}a_j\Delta_ju_{n}+f(|u_n|^2)u_n,\quad n\in\Z
\end{equation}
under (H1) and \eqref{nonres} becomes
\begin{equation}\label{rnonres}
\nu\ne\Phi(r+k)\, \forall k\in\Z\quad \textrm{and}\quad \nu\in(0,\bar R)\, .
\end{equation}
\end{remark}

\begin{remark}\label{rem4}
Assume that $U\in Y_r$ is a weak solution of \eqref{e2}, then
$$
\begin{gathered}
|U(z)|\le\sum_{k\in \Z}|U_k|\le\sqrt{\sum_{k\in \Z}|U_k|^2(r+k)^2}\sqrt{\sum_{k\in \Z}(r+k)^{-2}}\\ =\sqrt{\frac{\pi}{2}}{\rm cosec}\, \pi r\, \|U\|_{Y_r}.\end{gathered}
$$
Let $\wt R:=\max_{x\in \R_+}x\Phi(x)$. Then
$$
\begin{gathered}
|\nu|\|U'\|_{L^2}=|\nu|\|U\|_{Y_r}\le\|\LL_rU\|_{L^2}+\|f(|U|^2U)\|_{L^2}\\ \le \wt R\|U\|_{L^2}+c_1\left\||U|^{2s+1}+|U|\right\|_{L^2}\\ \le \left(\wt Rr_1^2+c_1r_1^2+c_1\frac{\pi^{s+1}}{2^s}{\rm cosec}^{2s+1}\, \pi r\, \|U\|_{Y_r}^{2s}\right)\|U\|_{Y_r}.\end{gathered}
$$
So if $U\ne 0$ then we obtain
$$
|\nu|\le\wt Rr_1^2+c_1r_1^2+c_1\frac{\pi^{s+1}}{2^s}{\rm cosec}^{2s+1}\, \pi r\, \|U\|_{Y_r}^{2s},
$$
i.e.
$$
\|U\|_{Y_r}\ge\sqrt{2}\sqrt[2s]{\frac{|\nu|-\wt Rr_1^2-c_1r_1^2}{c_1\pi^{s+1}{\rm cosec}^{2s+1}\, \pi r}}
$$
for
$$
|\nu|\ge \wt Rr_1^2+c_1r_1^2.
$$
Hence $\|U\|_{Y_r}\to \infty$ as $|\nu|\to \infty$ for a possible nonzero solution $U\in Y_r$ of \eqref{e2}.
\end{remark}

\subsection{Examples}\label{examples}

We first note
\begin{equation}\label{ex1}
\Phi(x)=\frac{2}{x}\sum_{j\in\N}a_j(1-\cos x j)=\frac{2}{x}\left[\sum_{j\in\N}a_j-\Re\sum_{j\in\N}a_j\eu^{xj\imath}\right].
\end{equation}
Now we turn the the following concrete examples.

\begin{example}\label{exam1}
First we suppose that $a_j$ is decaying rapidly to $0$. Let $a_j=\frac{1}{j!}$. Then
$$
\begin{gathered}
\sum_{j\in\N}\frac{1}{j!}\eu^{xj\imath}=\eu^{\eu^{x\imath}}-1=\eu^{\cos x +\imath\sin x}-1\\ =\eu^{\cos x}\left[\cos\sin x+\imath\sin \sin x\right]-1.
\end{gathered}
$$
So by \eqref{ex1} we derive
$$
\Phi(x)=\frac{2}{x}\left[\sum_{j\in\N}\frac{1}{j!}-\eu^{\cos x}\cos\sin x+1\right]=\frac{2}{x}\left[\eu-\eu^{\cos x}\cos\sin x\right].
$$
By Remark \ref{inrem1}, $\Phi\in C^1(\R,\R)$ with the graph on $[-4\pi,4\pi]$:
\begin{center}
\includegraphics[clip,width=6cm]{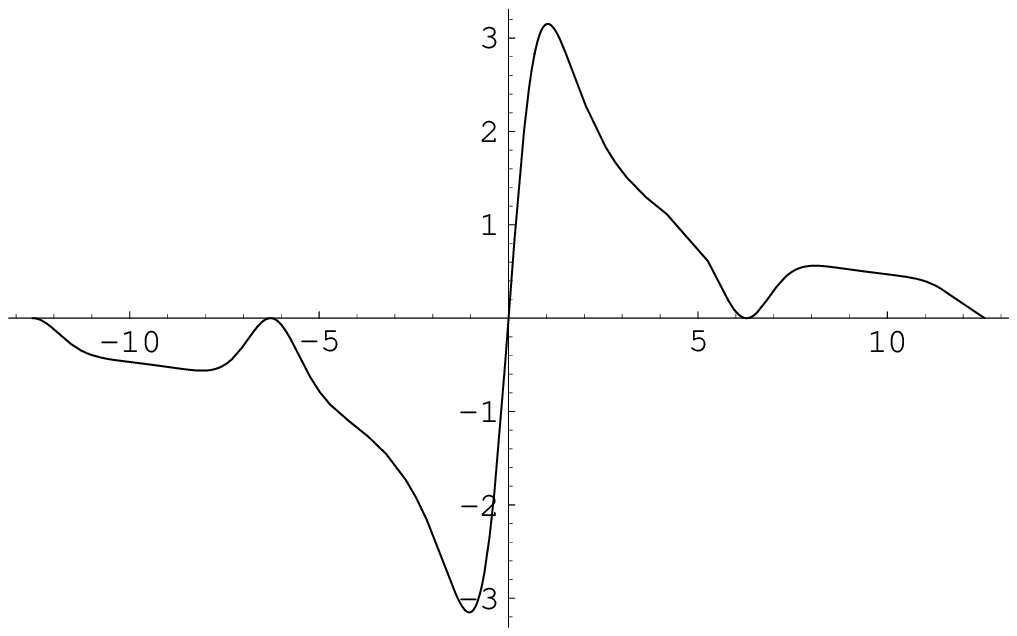}
\end{center}
A numerical solution shows that $\Phi$ has a maximum $\bar R=\Phi(x_0)\doteq3.15177$ at
$x_0\doteq1.03665$.
\end{example}

\begin{example}\label{exam2}
Now we suppose that $a_j$ is decaying exponentially to $0$. Let $a_j=\eu^{-j}$, hence we have the discrete
Kac-Baker interaction kernel \cite{CGGJMR,CCK}. Then
$$
\begin{gathered}
\sum_{j\in\N}\eu^{-j}\eu^{xj\imath}=\sum_{j\in\N}\eu^{(x\imath-1)j}=\frac{\eu^{x\imath-1}}{1-\eu^{x\imath-1}}\\ =\frac{\cos x+\imath\sin x}{\eu-\cos x-\imath \sin x}=\frac{\eu\cos x-1+\eu\imath\sin x}{\eu^2+1-2\eu\cos x}.
\end{gathered}
$$
So by \eqref{ex1} we derive
$$
\Phi(x)=\frac{2}{x}\left[\sum_{j\in\N}\eu^{-j}-\frac{\eu\cos x-1}{\eu^2+1-2\eu\cos x}\right]=\frac{2\eu(\eu+1)(1-\cos x)}{(\eu-1)x(\eu^2+1-2\eu\cos x)}.
$$
By Remark \ref{inrem1}, $\Phi\in C^1(\R,\R)$ with the graph on $[-4\pi,4\pi]$:
\begin{center}
\includegraphics[clip,width=6cm]{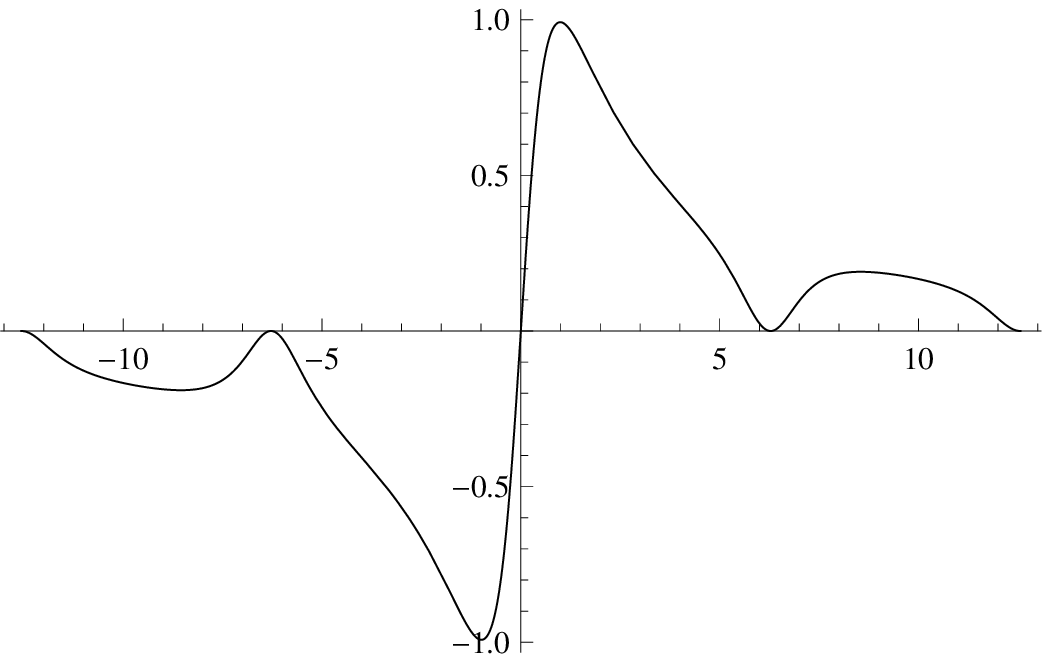}
\end{center}
A numerical solution shows that $\Phi$ has a maximum $\bar R=\Phi(x_0)\doteq0.992045$ at $x_0\doteq0.991541$.\end{example}

\begin{example}\label{exam3}
In this example, we suppose that $a_j$ is decaying polynomially to $0$ (cf. \cite{MG}), by considering several cases:

\

1. Let $a_j=\frac{1}{j^4}$. Then
$$
\Phi(x)=\frac{2}{x}\sum_{j\in\N}\left(\frac{1}{j^4}-\frac{1}{j^4}\cos xj\right)=\frac{\left(|x|-2\pi\left[\frac{|x|}{2\pi}\right]\right)^2}{24x}\left(2\pi-|x|+2\pi\left[\frac{|x|}{2\pi}\right]\right)^2.
$$
Here $[\cdot]$ is the integer part function. By Remark \ref{inrem1}, $\Phi\in C^1(\R,\R)$ with the graph on $[-4\pi,4\pi]$:
\begin{center}
\includegraphics[clip,width=6cm]{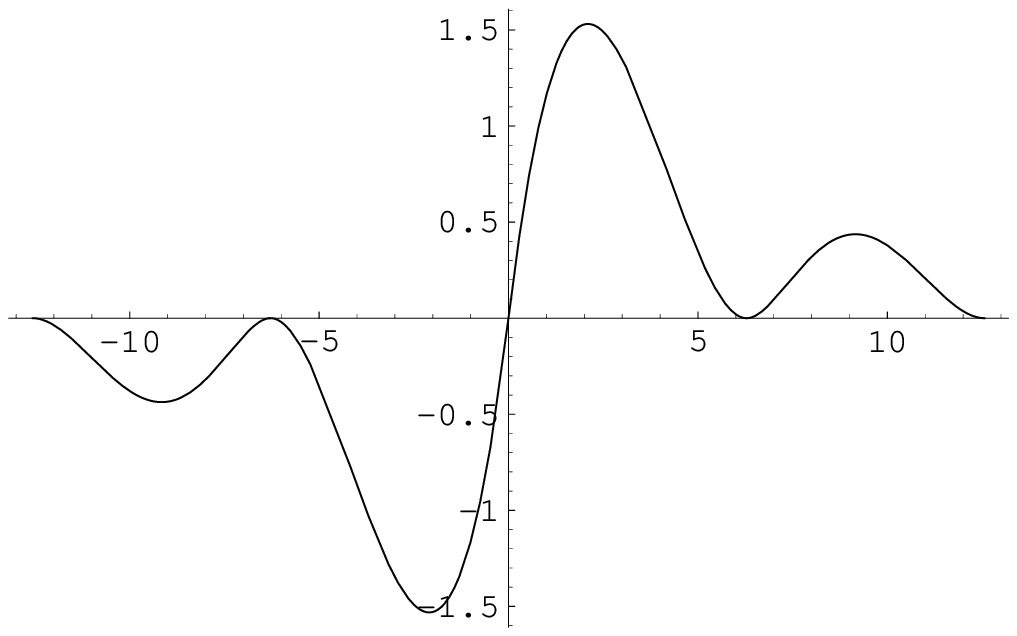}
\end{center}
$\Phi$ has a maximum $\bar R=\Phi(x_0)=\frac{4\pi^3}{81}\doteq 1.53117$ at $x_0=2\pi/3\doteq 2.0944$.
Similar results hold for $a_j=j^{-\beta}$ with $\beta >3$ by Remark \ref{inrem1}.

\

2. Let $a_j=\frac{1}{j^3}$. So we consider the dipole-dipole interaction (cf. \cite{ACG,CGGJMR,LCh,MG}).
By Remark \ref{inrem1}, $\Phi\in C(\R,\R)$ with the graph on $[-4\pi,4\pi]$:
\begin{center}
\includegraphics[clip,width=6cm]{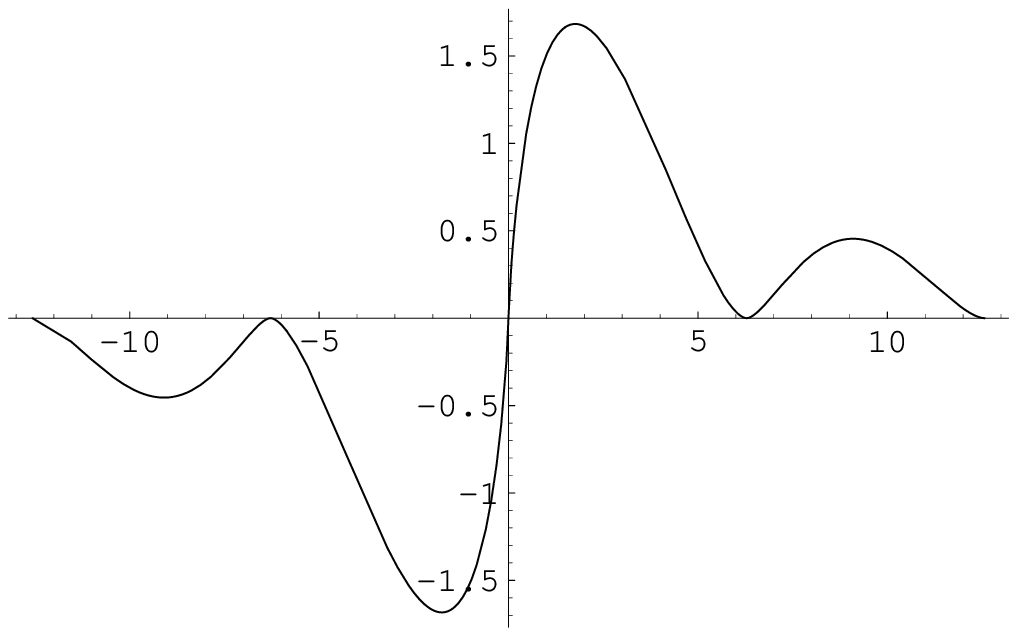}
\end{center}
$\Phi$ has a maximum $\bar R=\Phi(x_0)\doteq 1.68311$ at $x_0\doteq 1.76076$. Next we know that \cite{Zyg}
$$
\sum_{j\in\N}\frac{1}{j}\cos xj=-\ln\left|2\sin\frac{x}{2}\right|,\quad 0<x<2\pi.
$$
Then
$$
\sum_{j\in\N}\frac{1}{j^2}\sin xj=-\int\limits_0^x\ln\left|2\sin\frac{s}{2}\right|ds.
$$
Using $x/2\le \sin x\le x$ for $x\ge 0$ small, we derive
$$
x-x\ln x=-\int\limits_0^x\ln s \, ds\le \sum_{j\in\N}\frac{1}{j^2}\sin xj\le-\int\limits_0^x\ln\frac{s}{2}ds=x-x\ln\frac{x}{2}.
$$
By L'Hopital's rule, we obtain
$$
\lim_{x\to 0_+}\frac{\Phi(x)}{x}=\lim_{x\to 0_+}\frac{4\sum_{j\in\N}\frac{1}{j^3}\sin^2 xj}{x^2}=\lim_{x\to 0_+}\frac{2\sum_{j\in\N}\frac{1}{j^2}\sin 2xj}{x}=+\infty.
$$
Hence $\Phi$ has no derivative at $x_0=0$.

Next, let $a_j=j^{-\beta}$ for $2<\beta<3$. By Remark \ref{inrem1}, $\Phi$ is still continuous. Since $\Phi(0)=0$ and
$$
\lim_{x\to 0_+}\frac{\Phi(x)}{x}\ge\lim_{x\to 0_+}\frac{4\sum_{j\in\N}\frac{1}{j^3}\sin^2 xj}{x^2}=+\infty,
$$
$\Phi(x)$ is continuous but not $C^1$-smooth on $\R$.

\

3. Let $a_j=\frac{1}{j^2}$. Then
$$
\Phi(x)=\frac{2}{x}\sum_{j\in\N}\left(\frac{1}{j^2}-\frac{1}{j^2}\cos xj\right)=\frac{\left(|x|-2\pi\left[\frac{|x|}{2\pi}\right]\right)}{2x}\left(2\pi-|x|+2\pi\left[\frac{|x|}{2\pi}\right]\right).
$$
By Remark \ref{inrem1}, $\Phi\in C(\R\setminus\{0\},\R)$ with the graph on $[-4\pi,4\pi]$:
\begin{center}
\includegraphics[clip,width=6cm]{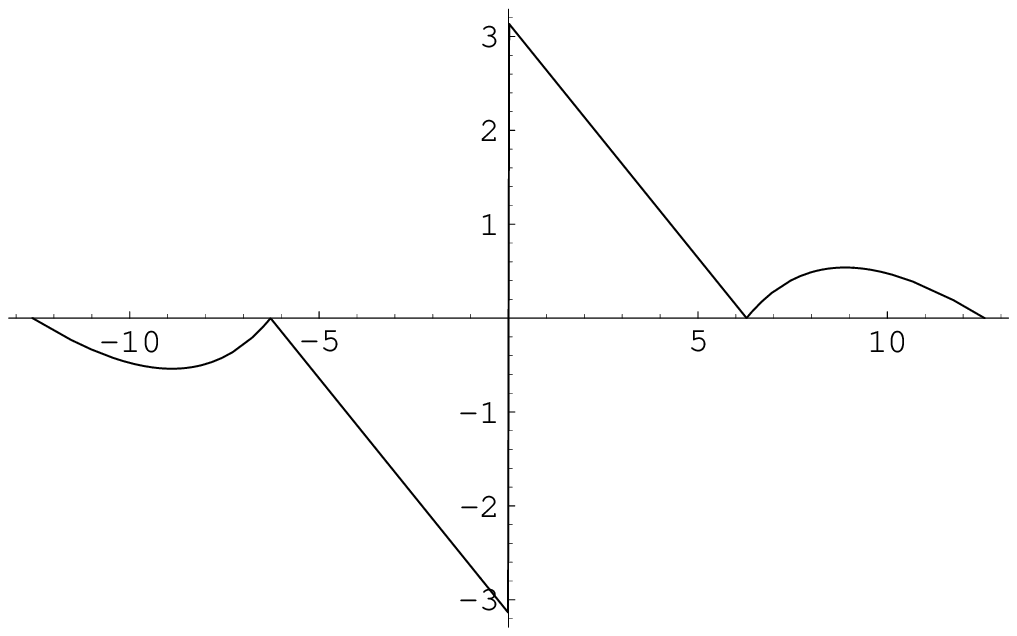}
\end{center}
$\Phi$ is discontinuous at $x_0=0$ where it has a supremum $\bar R=\pi$.

\

4. Let $a_j=j^{-\beta}$ for $1<\beta<2$. For $\beta =7/4$, $\Phi$ has the graph on $[-4\pi,4\pi]$:
\begin{center}
    \includegraphics[clip,width=6cm]{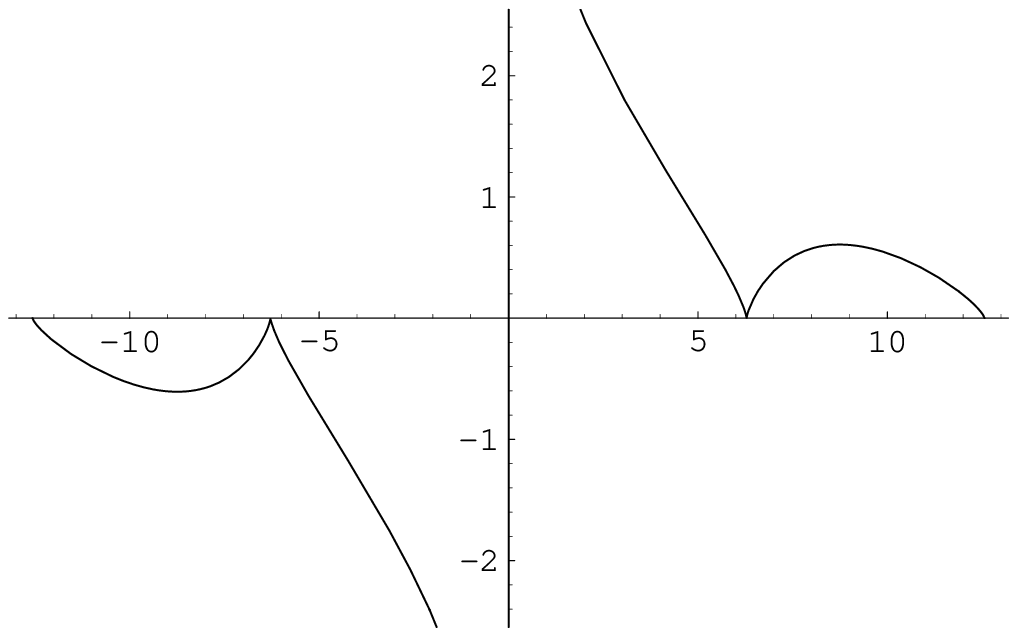}
\end{center}
Hence $\Phi$ is discontinuous at $x_0=0$ with $\lim_{x\to0_+}\Phi(x)=+\infty$. We show that this holds for any $1<\beta<2$. First suppose $3/2<\be<2$. Then the series
$$
\Upsilon(x):=\sum_{j\in\N}\frac{1}{j^{\be-1}}\sin jx
$$
converges uniformly on any $[\ep,2\pi-\ep]$ for $0<\ep<\pi$. But since $\sum_{j\in\N}\frac{1}{j^{2(\be-1)}}<\infty$, so $\Upsilon\in L^2\subset L^1$. On the other hand, we know \cite{Zyg} that
$$
\Upsilon(x):=\Gamma(2-\be)\cos\frac{\pi(\be-1)}{2}\cdot x^{\be-2}+O(1)
$$
on $(0,\pi]$. Hence
$$
\sum_{j\in\N}\frac{1-\cos jx}{j^{\be}}=\int_0^x\Upsilon(s)ds=\frac{\Gamma(2-\be)}{\be-1}\cos\frac{\pi(\be-1)}{2}\cdot x^{\be-1}+O(x)
$$
on $[0,\pi]$. Consequently, we obtain
$$
\Phi(x)=\frac{2\Gamma(2-\be)}{\be-1}\cos\frac{\pi(\be-1)}{2}\cdot x^{\be-2}+O(1)
$$
on $(0,\pi]$, which implies $\lim_{x\to0_+}\Phi(x)=+\infty$ for any $3/2<\be<2$. Finally, if $1<\be\le 3/2$, then
$$
\Phi(x)\ge \frac{2}{x}\sum_{j\in\N}\frac{1-\cos jx}{j^{7/4}}=\frac{8}{3}\Gamma\left(\frac{1}{4}\right)\cos\frac{3\pi}{8}\cdot \frac{1}{\sqrt[4]{x}}+O(1)\to+\infty
$$
as $x\to0_+$. Hence, $\lim_{x\to0_+}\Phi(x)=+\infty$ for any $1<\be<2$.\end{example}

Summarizing, we have the following result.

\begin{lemma}\label{exlem1} Let $a_j=j^{-\be}$ for $1<\be$. Then
\begin{itemize}
\item[(i)] $\Phi\in C^1(\R,\R)$ for $\be>3$, and $\RR\Phi=[-\bar R,\bar R]$ for some $\bar R<\infty$.
\item[(ii)] $\Phi\in C(\R,\R)$ and  $\Phi\notin C^1(\R,\R)$ for $2<\be\le 3$, and $\RR\Phi=[-\bar R,\bar R]$ for some $\bar R<\infty$.
\item[(iii)] $\Phi\in C(\R\setminus\{0\},\R)$ and  $\Phi\notin C(\R,\R)$ for $\be=2$, and $\RR\Phi=(-\pi,\pi)$.
\item[(iv)] $\Phi\in C(\R\setminus\{0\},\R)$ and  $\Phi\notin C(\R,\R)$ for $1<\be<2$, and $\RR\Phi=(-\infty,+\infty)$.
\end{itemize}
\end{lemma}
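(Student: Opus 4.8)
\emph{Proof proposal.} The plan is to read all four items off from Remark~\ref{inrem1} together with the explicit computations already performed in Example~\ref{exam3}; the argument is therefore mostly bookkeeping, the only genuine work being to organize the endpoint behaviour of the range intervals. I would first dispose of the positive regularity claims. For $a_j=j^{-\be}$ with $\be>1$ we have $\sum_{j\in\N}|a_j|<\infty$, so $\Phi\in C(\R\setminus\{0\},\R)$ by Remark~\ref{inrem1}; for $\be>2$, $\sum_{j\in\N}j|a_j|=\sum_{j\in\N}j^{1-\be}<\infty$, giving $\Phi\in C(\R,\R)$; and for $\be>3$, $\sum_{j\in\N}j^2|a_j|<\infty$, giving $\Phi\in C^1(\R,\R)$. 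This covers the positive assertions in (i)--(iv). To see that the regularity is no better in (ii)--(iv): for $2<\be\le3$ the inequality $j^{-\be}\ge j^{-3}$ yields $\Phi(x)/x=(4/x^2)\sum_{j\in\N}j^{-\be}\sin^2(xj/2)\ge(4/x^2)\sum_{j\in\N}j^{-3}\sin^2(xj/2)\to+\infty$ as $x\to0_+$, by the L'Hopital computation of Example~\ref{exam3}; since $\Phi(0)=0$ (continuity together with oddness), $\Phi$ is not differentiable at $0$, so $\Phi\notin C^1(\R,\R)$. For $\be=2$ the closed form in Example~\ref{exam3} gives $\Phi(x)=\pi-x/2$ on $(0,2\pi)$, so $\lim_{x\to0_+}\Phi(x)=\pi\neq0$ and $\Phi$ admits no continuous extension to $0$; and for $1<\be<2$, Example~\ref{exam3} (Zygmund's asymptotics when $3/2<\be<2$, comparison with $\be=7/4$ when $1<\be\le3/2$) gives $\lim_{x\to0_+}\Phi(x)=+\infty$, so again $\Phi\notin C(\R,\R)$.

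Next I would establish the range in cases (i) and (ii). Here $\Phi\in C(\R,\R)$, $\Phi$ is odd with $\Phi(0)=0$, $\Phi(x)\to0$ as $|x|\to\infty$ (Remark~\ref{inrem1}), and, crucially, $\Phi(x)=(4/x)\sum_{j\in\N}j^{-\be}\sin^2(xj/2)\ge0$ for every $x>0$ because all the coefficients are positive. Hence the restriction of $\Phi$ to $[0,\infty)$ is continuous, nonnegative, vanishes at $0$ and at $\infty$, and is not identically zero (for instance $\Phi(\pi)=(4/\pi)\sum_{j\ \mathrm{odd}}j^{-\be}>0$), so it attains a maximum $\bar R\in(0,\infty)$. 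Since $\Phi$ is continuous on the connected set $(0,\infty)$, the image $\Phi((0,\infty))$ is an interval; it is contained in $[0,\bar R]$, contains $\bar R$, and contains $0=\Phi(2\pi)$, hence equals $[0,\bar R]$. By oddness $\RR\Phi=[-\bar R,\bar R]$ with $\bar R<\infty$, as claimed.

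For the range in (iii) and (iv), note that in both cases $\Phi$ is continuous on $(0,\infty)$, strictly positive on $(0,2\pi)$ (the $j=1$ term $\sin^2(x/2)$ is positive there), and $\Phi(2\pi)=0$; by oddness it is enough to identify $\Phi((0,\infty))$, which is again an interval containing $0$. When $\be=2$, the explicit formula gives $\Phi(x)=\pi-x/2$ on $(0,2\pi)$, which sweeps $(0,\pi)$, while on each interval $(2\pi k,2\pi(k+1))$ with $k\ge1$ the function $\Phi$ is a positive bump bounded above by $\pi$; thus $\sup_{x>0}\Phi(x)=\pi$ is never attained and $\Phi((0,\infty))=[0,\pi)$, so $\RR\Phi=(-\pi,\pi)$. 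When $1<\be<2$, continuity of $\Phi$ on $(0,2\pi)$ together with $\lim_{x\to0_+}\Phi(x)=+\infty$ and $\lim_{x\to2\pi_-}\Phi(x)=\Phi(2\pi)=0$ forces $\Phi((0,2\pi))=(0,\infty)$, whence $\Phi((0,\infty))=[0,\infty)$ and $\RR\Phi=(-\infty,+\infty)$.

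I do not expect a real obstacle here: the delicate small-$x$ asymptotics for the borderline exponents $\be\in(1,2]$ and $\be\in(2,3]$ are exactly what Example~\ref{exam3} supplies. The one point that needs care is pinning down whether the endpoints of the range intervals are attained, and this is settled by combining the nonnegativity of $\Phi$ on $(0,\infty)$, connectedness, the exact values $\Phi(2\pi)=0$ and $\Phi(\pi)>0$, and the limiting behaviour of $\Phi$ at $0$ and at $\infty$.
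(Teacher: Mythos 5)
Your proof is correct and takes essentially the same route as the paper: there the lemma is stated as a summary of Remark \ref{inrem1} together with the computations of Example \ref{exam3} (explicit formulas for $\be=4,2$, the L'Hopital/Zygmund small-$x$ asymptotics for $2<\be\le 3$ and $1<\be<2$), which is exactly what you invoke. The only material you add is the explicit endpoint bookkeeping for the ranges (attainment of the maximum when $\be>2$ via continuity, nonnegativity and decay at infinity; non-attainment of the supremum $\pi$ when $\be=2$; surjectivity of $\Phi$ onto $[0,\infty)$ when $1<\be<2$ by the intermediate value theorem), which the paper leaves implicit, and those arguments are sound.
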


\begin{remark}\label{exrem1} We see that if the interaction is strong, so the case (iv) of Lemma \ref{exlem1} holds,
then there are continuum many quasi periodic traveling wave solutions $U(z)$ of Theorem \ref{th1} for any $\nu\ne0$, $T>0$ and $r\in(0,1)$
such that $r\notin\left\{z-[z]\mid z\in \frac{T}{2\pi}\Phi^{-1}(-\nu)\right\}$, with $\|U\|_{Y_r}\to\infty$ as $|\nu|\to\infty$ by Remark \ref{rem4}.
On the other hand, if the interaction is weak, then we can show in addition quasi periodic traveling waves with speeds in intervals $(-\infty,-\bar R)$
and $(\bar R,\infty)$ for any $T>0$ and $r\in(0,1)$.\end{remark}

\begin{remark}\label{exremm1a} For the reader convenience, we present the above graphs of function $\Phi$ to visualize
their quantitative and qualitative changes according to different choices of values of sequences $\{a_j\}_{j\in\Z}$ in
\eqref{e1}, and hence with different consequences from Theorems \ref{th1} and \ref{th2} for the existence and bifurcations of
quasi periodic traveling wave solutions of \eqref{e1}. Moreover, these graphs can be compared with similar ones for traveling
waves for higher dimensional DNLS in Section \ref{higher} and for traveling waves with frequencies in Section \ref{more}.
Finally these examples are motivated by applications mentioned in the corresponding references.\end{remark}

\section{Bifurcation of Traveling Wave Solutions}\label{bifur}

In this section we proceed with the study of \eqref{e2} when nonresonance of Theorem \ref{th1} fails, i.e. $r\in\{\bar r_1,\bar r_2,\cdots,\bar r_m\}$.
We scale in \eqref{e2} the velocity by $\nu\leftrightarrow \nu/(1+\la)$ to get equation
\begin{equation}\label{be2}
-\nu \imath U'(z)=(1+\la)\left(\sum\limits_{j\in\N}a_j\partial_jU(z)+f(|U(z)|^2)U(z)\right)\, ,
\end{equation}
where $\la$ is a small parameter, i.e. $u_n(t)=U\left(n-\frac{\nu}{1+\la}t\right)$ is a solution of \eqref{e1}. We are interested in the existence
of quasi periodic solutions $U(z)$ of \eqref{be2} stated in Theorem \ref{th2}.

\subsection{Preliminaries}\label{prelim2}

In this subsection we recall some results from critical point theory of \cite {M1}. Let $H$ be a Hilbert space with a scalar product $(\cdot,\cdot)$
and the corresponding norm $\|\cdot\|$. Let $\Theta : S^1\to L(H)$ be an isometric representation of the unit circle $S^1$ over $H$,
i.e. the following properties are satisfied
\begin{itemize}
\item[(R)] $\Theta(0)=\I$ - the identity, $\Theta(\theta_1+\theta_2)=\Theta(\theta_1)\Theta(\theta_2)$ for any $\theta_1,\theta_2\in S^1$, $(\theta,h)\to \Theta(\theta)h$ is continuous, and $\|\Theta(\theta)h\|=\|h\|$ for any $\theta\in S^1$ and $h\in H$.
\end{itemize}
We set
$$
\textrm{Fix}(S^1):=\left\{h\in H\mid \Theta(\theta)h=h\, \forall \theta\in\Theta\right\}.
$$
We consider $J_1,J_2\in C^2(H,\R)$ such that
\begin{itemize}
\item[(H1)] $J_2(0)=0$ and $\nabla J_1(0)=\nabla J_2(0)=0$.
\item[(H2)] $\textrm{Hess}\, J_1(0)$ is a Fredholm operator, i.e. $\dim \textrm{Hess}\, J_1(0)<\infty$, $\RR \textrm{Hess}\, J_1(0)$ is closed and $\textrm{codim}\, \RR \textrm{Hess}\, J_1(0)<\infty$.
\item[(H3)] $\dim \ker \textrm{Hess}\, J_1(0)\ge 2$ and $\textrm{Hess}\, J_2(0)$ is positive definite on $\ker \textrm{Hess}\, J_1(0)$.
\item[(H4)] $J_1$ and $J_2$ are $S^1$-invariant, i.e. $J_{1,2}(\Theta(\theta)h)=\Theta(\theta)J_{1,2}(h)$ for any $\theta\in\Theta$ and $h\in H$.
\item[(H5)] $\ker \textrm{Hess}\, J_1(0)\cap \textrm{Fix}(S^1)=\{0\}$.
\end{itemize}
Now we can state the following \cite[Theorem 6.7]{M1}.

\begin{theorem}\label{bifmaw}
Under the above assumptions (H1)-(H5), for each sufficiently small $\ep>0$, equation
\begin{equation}\label{ebifmaw}
\nabla J_1(h)+\la\nabla J_2(h)=0
\end{equation}
has at leat $\frac{1}{2}\dim \ker \textrm{\rm Hess}\, J_1(0)$ of $S^1$-orbit solutions
$$
\left\{(\la_k(\ep),\Theta(\theta))h_k(\ep)\mid \theta \in S^1\right\},\quad k=1,2,\cdots, \frac{1}{2}\dim \ker \textrm{\rm Hess}\, J_1(0)
$$
such that $J_2(h_k(\ep))=\ep$ and $h_k(\ep)\to 0$, $\la_k(\ep)\to 0$ as $\ep\to 0$. Clearly $h_k(\ep)\ne 0$.\end{theorem}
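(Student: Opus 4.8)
The plan is to prove Theorem \ref{bifmaw} by an $S^1$-equivariant Lyapunov--Schmidt reduction onto the kernel $N:=\ker\textrm{Hess}\,J_1(0)$, followed by an equivariant minimax count on a constraint sphere. Write $A:=\textrm{Hess}\,J_1(0)$ and $B:=\textrm{Hess}\,J_2(0)$. Differentiating the invariance (H4) twice at $0$ shows that $A$ and $B$ commute with every $\Theta(\theta)$; since by (H2) $A$ is self-adjoint and Fredholm, this yields an orthogonal, $S^1$-invariant splitting $H=N\oplus W$ with $W=\textrm{ran}\,A=N^\perp$ and $A|_W:W\to W$ an isomorphism. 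By (H5) the representation of $S^1$ on $N$ contains no trivial summand, so $\dim N=2m$ is even and $S^1$ acts on the unit sphere $S(N)$ without fixed points. Let $P,Q$ be the orthogonal projections onto $N,W$; they commute with $\Theta$.

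First I would carry out the reduction. Consider $\FF(u,w,\la):=Q\big(\nabla J_1(u+w)+\la\nabla J_2(u+w)\big)$ on $N\times W\times\R$. By (H1) one has $\FF(0,0,0)=0$, and $D_w\FF(0,0,0)=A|_W$ is invertible, so the equivariant implicit function theorem yields a $C^1$ map $w=w(u,\la)$, defined and $S^1$-equivariant for $(u,\la)$ near $(0,0)$, with $w(0,0)=0$ and $D_uw(0,0)=0$ (the latter because $D_u\FF(0,0,0)=QA|_N=0$), solving the $W$-component of \eqref{ebifmaw}. Setting $h(u,\la):=u+w(u,\la)$ and the reduced functionals $\Phi(u,\la):=J_1(h)$, $\Psi(u,\la):=J_2(h)$ on $N$, a direct computation using the auxiliary equation $Q(\nabla J_1+\la\nabla J_2)(h)=0$ gives the key identity
\[
\nabla_u\big[\Phi(u,\la)+\la\Psi(u,\la)\big]=P\big(\nabla J_1(h)+\la\nabla J_2(h)\big),
\]
so that solutions of \eqref{ebifmaw} near $0$ correspond exactly to zeros of the finite-dimensional $S^1$-equivariant field $u\mapsto\nabla_u[\Phi+\la\Psi]$ on $N$, with $-\la$ playing the role of a Lagrange multiplier.

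Next I would set up the constrained problem. From $\nabla J_1(0)=0$, $A|_N=0$ and $D_uw(0,0)=0$ one obtains the expansions $\Phi(u,\la)=O(\|u\|^3)$ and $\Psi(u,\la)=\tfrac12(Bu,u)+o(\|u\|^2)$, uniformly for small $\la$; by (H3) the form $(Bu,u)$ is positive definite on $N$, so for $\rho,\ep$ small the level set $\Sigma_{\ep,\la}:=\{u\in B_\rho\subset N:\Psi(u,\la)=\ep\}$ is an $S^1$-invariant $C^1$-manifold, equivariantly diffeomorphic to $S(N)$, hence compact and fixed-point free. On $\Sigma_{\ep,\la}$ I restrict $\Phi(\cdot,\la)$ and look for critical $S^1$-orbits; since $\Sigma_{\ep,\la}$ is compact the Palais--Smale condition is automatic. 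Invoking the $S^1$-cohomological (Fadell--Rabinowitz) index, whose value on a fixed-point-free representation sphere $S(N)$ with $\dim N=2m$ equals $m$, the standard Lusternik--Schnirelmann minimax produces $m=\tfrac12\dim\ker\textrm{Hess}\,J_1(0)$ index levels $c_1\le\cdots\le c_m$ and at least $m$ distinct critical $S^1$-orbits of $\Phi(\cdot,\la)$ on $\Sigma_{\ep,\la}$, coincident levels being handled by the usual index-drop argument. Because $\|h(u,\la)\|=O(\sqrt\ep)$ on $\Sigma_{\ep,\la}$, testing $\nabla_u\Phi=\mu\,\nabla_u\Psi$ against $u$ gives $2\ep\,\mu\approx(\nabla_u\Phi,u)=O(\ep^{3/2})$, so the constrained multiplier satisfies $\mu=O(\sqrt\ep)\to0$; imposing the consistency $\mu=-\la$ then fixes $\la=\la_k(\ep)\to0$, while $h_k(\ep)\to0$ and $J_2(h_k(\ep))=\ep$ by construction.

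The main obstacle is the simultaneous handling of $\la$ and the constraint: the reduced functionals depend on $\la$, so the minimax must be run on the $\la$-parametrized family $\Sigma_{\ep,\la}$ while forcing the multiplier $\mu$ of each constrained critical orbit to equal $-\la$. I expect to resolve this by noting that the $S^1$-index of $\Sigma_{\ep,\la}$ is $m$ for all small $\la$ by homotopy invariance of the index, and that the map $\la\mapsto\mu_k(\la,\ep)$ is a contraction near $0$ (since $\mu_k=O(\sqrt\ep)$ uniformly in $\la$), so the fixed-point equation $\la=-\mu_k(\la,\ep)$ closes the coupled system for each small $\ep$ and each $k=1,\dots,m$. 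The remaining verifications---equivariance and nondegeneracy of the reduction, compactness of $\Sigma_{\ep,\la}$, and the $C^2$-regularity required by the index theory---are routine consequences of (H1)--(H5).
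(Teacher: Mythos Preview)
The paper does not prove Theorem~\ref{bifmaw}; it is quoted as \cite[Theorem~6.7]{M1} (Mawhin--Willem) and used as a ready-made tool to obtain Corollary~\ref{bifcor1} and then Theorem~\ref{th2}. So there is no ``paper's own proof'' to compare against. Your overall scheme --- $S^1$-equivariant Lyapunov--Schmidt reduction onto $N=\ker A$ followed by an $S^1$-index (Fadell--Rabinowitz) minimax on a sphere in $N$ --- is exactly the classical strategy behind this result and is essentially the route taken in \cite{M1}; the reduction identity $\nabla_u[\Phi+\la\Psi]=P(\nabla J_1+\la\nabla J_2)(h)$ and the expansions you record are correct.

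There is, however, a genuine gap in your closure of the $\la$--constraint coupling. The claim that ``$\la\mapsto\mu_k(\la,\ep)$ is a contraction near $0$ (since $\mu_k=O(\sqrt\ep)$ uniformly in $\la$)'' is a non sequitur: a uniform bound on the \emph{range} of $\mu_k$ says nothing about its Lipschitz constant in $\la$. More seriously, the Lagrange multiplier attached to a minimax critical orbit is not, a priori, a single-valued or even continuous function of $\la$; the minimax \emph{values} $c_k(\la,\ep)$ vary continuously, but the set of critical points realising them --- and hence their multipliers --- can jump. So neither a contraction nor a Brouwer-type fixed-point argument is available without substantial additional work, and as written the loop does not close.

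The way this circularity is avoided in \cite{M1} (and in the Weinstein--Moser literature) is to eliminate $\la$ \emph{before} the minimax rather than after. With $g(v,\la):=(J_1+\la J_2)(v+w(v,\la))-\la\ep$ one has, as you computed, $\nabla_v g=P(\nabla J_1+\la\nabla J_2)(h)$ and moreover $\partial_\la g=J_2(h)-\ep$; thus the full problem is the search for critical points of $g$ on $N\times\R$. One then solves a single scalar equation --- e.g.\ $\langle\nabla_v g(v,\la),v\rangle=0$, whose $\la$-derivative is $(Bv,v)+o(\|v\|^2)>0$ by (H3) --- for $\la=\Lambda(v)$ via the implicit function theorem (made uniform near $v=0$ by the rescaling $v=s\,u$, $u\in S(N)$). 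This leaves a single $S^1$-invariant $C^1$ functional on a manifold equivariantly diffeomorphic to $S(N)$, to which the index count applies directly and yields the $m=\tfrac12\dim N$ orbits, with $\la_k(\ep)=\Lambda(v_k(\ep))\to0$ coming out automatically rather than through an a posteriori fixed-point matching.
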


\begin{remark}\label{bifrem1}
When $\textrm{Hess}\, J_2(0)$ is negative definite on $\ker \textrm{Hess}\, J_1(0)$, then Theorem \ref{bifmaw} holds for $\ep<0$ small.\end{remark}

\begin{remark}\label{birem1b} By (H4), $\ker \textrm{Hess}\, J_1(0)$ is invariant with respect to $\Theta$. Using (H5),
$\dim \ker \textrm{Hess}\, J_1(0)$ is even.\end{remark}

Now assume $H=H_+\oplus H_-$ be an orthogonal and $\Theta$-invariant decomposition with the corresponding orthogonal
projections $P_\pm : H\to H_\pm$. Then $\Theta(\th)P_\pm=P_\pm\Theta(\th)$ for any $\th\in\Theta$. Let us consider an equation
\begin{equation}\label{bifeq2}
\zeta(\I_+ -\I_-)h+(1+\la)(\KK h+\nabla\FF(h))=0,
\end{equation}
where $\zeta \ne0$ is a constant, $\la$ is a small parameter, $\I_\pm : H_\pm\to H_\pm$ are the identities. We suppose
\begin{itemize}
\item[(A)] $\KK : H\to H$ is compact self-adjoint and $\FF\in C^2(H,\R)$ with $\FF(0)=0$, $\nabla\FF(0)=0$, $\textrm{\rm Hess}\, \FF(0)=0$, and
$\KK$, $\FF$ are $S^1$-invariant. Moreover, $\KK H_\pm\subset H_\pm$.
\end{itemize}
Then
$$
\begin{gathered}
J_1(h)=\frac{\zeta}{2}(\|P_+h\|^2 -\|P_-h\|^2)+\frac{1}{2}(\KK h,h)+\FF(h),\\
J_2(h)=\frac{1}{2}(\KK h,h)+\FF(h).\end{gathered}
$$
Hence
$$
\begin{gathered}
J_1(0)=J_2(0)=0,\quad \nabla J_1(0)=\nabla J_2(0)=0,\\
\textrm{Hess}\, J_1(0)=\zeta(\I_+ -\I_-)+\KK,\quad \textrm{Hess}\, J_2(0)=\KK.\end{gathered}
$$
So assumptions (H1), (H2) and (H4) are satisfied. Since $P_\pm\KK=\KK P_\pm$, equation
$$
\textrm{Hess}\, J_1(0)h=\zeta(\I_+ -\I_-)h+\KK h=0
$$
splits into
$$
\KK h_+=-\zeta h_+,\quad \KK h_-=\zeta h_-,\quad h_\pm=P_\pm h.
$$
Consequently, supposing either
\begin{itemize}
\item[(B$_+$)] $\ker (\zeta \I+\KK)\cap H_+=\{0\}$, $\dim \ker (\zeta \I-\KK)\cap H_-\ge 2$ and $\ker (\zeta \I-\KK)\cap H_-\cap\textrm{Fix}(S^1)=\{0\}$
\end{itemize}
or
\begin{itemize}
\item[(B$_-$)] $\ker (\zeta \I-\KK)\cap H_-=\{0\}$, $\dim \ker (\zeta \I+\KK)\cap H_+\ge 2$ and $\ker (\zeta \I+\KK)\cap H_+\cap\textrm{Fix}(S^1)=\{0\}$
\end{itemize}
we get either
$$
\ker\textrm{Hess}\, J_1(0)=\ker (\zeta \I-\KK)\cap H_-
$$
or
$$
\ker\textrm{Hess}\, J_1(0)=\ker (\zeta \I+\KK)\cap H_+
$$
and so (H5) holds as well. Finally, we derive
$$
\textrm{Hess}\, J_2(0)|\ker\textrm{Hess}\, J_1(0)=\pm \zeta \I
$$
and thus (H3) is also verified (cf. Remark \ref{bifrem1}). Summarizing, Theorem \ref{bifmaw} and Remark \ref{bifrem1} is applicable to \eqref{bifeq2}:

\begin{corollary}\label{bifcor1}
Under assumptions (A) and (B$_\pm$), for each sufficiently small $\ep\ne0$, $\pm \ep\zeta>0$, equation \eqref{bifeq2}
has at leat $\frac{1}{2}\dim \ker (\zeta \I\mp\KK)\cap H_\mp$ of $S^1$-orbit solutions
$$
\left\{(\la_k(\ep),\Theta(\theta))h_k(\ep)\mid \theta \in S^1\right\},\quad k=1,2,\cdots, \frac{1}{2}\dim \ker (\zeta \I\mp\KK)\cap H_\mp
$$
such that $\frac{1}{2}(\KK h_k(\ep),h_k(\ep))+\FF(h_k(\ep))=\ep$ and $h_k(\ep)\to 0$, $\la_k(\ep)\to 0$ as $\ep\to 0$. Clearly $h_k(\ep)\ne 0$.\end{corollary}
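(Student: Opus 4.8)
The plan is to read \eqref{bifeq2} as the Euler--Lagrange equation $\nabla J_1(h)+\la\nabla J_2(h)=0$ for the pair $(J_1,J_2)$ displayed just above the statement, and then to quote Theorem \ref{bifmaw} together with Remark \ref{bifrem1}. First I would record that $\nabla J_1(h)=\zeta(\I_+-\I_-)h+\KK h+\nabla\FF(h)$ and $\nabla J_2(h)=\KK h+\nabla\FF(h)$, so that $\nabla J_1(h)+\la\nabla J_2(h)=0$ is literally \eqref{bifeq2}; in particular the normalization $J_2(h_k(\ep))=\ep$ delivered by Theorem \ref{bifmaw} is exactly the stated constraint $\frac{1}{2}(\KK h_k(\ep),h_k(\ep))+\FF(h_k(\ep))=\ep$.

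Next I would check the hypotheses (H1)--(H5) of Theorem \ref{bifmaw} for $(J_1,J_2)$, most of which was already carried out in the discussion preceding the statement: $J_1,J_2\in C^2(H,\R)$ because $\FF\in C^2$ by (A); $J_1(0)=J_2(0)=0$ and $\nabla J_{1,2}(0)=0$; $\textrm{Hess}\, J_1(0)=\zeta(\I_+-\I_-)+\KK$ is Fredholm since $\zeta(\I_+-\I_-)$ is bounded invertible (eigenvalues $\pm\zeta$, $\zeta\ne0$) and $\KK$ is compact; and $J_1,J_2$ are $S^1$-invariant because $\KK$ and $\FF$ are, the norm is $\Theta$-invariant, and the splitting $H=H_+\oplus H_-$ is $\Theta$-invariant. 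For (H3) and (H5) I would use that, since $P_\pm\KK=\KK P_\pm$, the equation $\textrm{Hess}\, J_1(0)h=0$ splits into $\KK h_+=-\zeta h_+$ and $\KK h_-=\zeta h_-$; under (B$_+$) this forces $h_+=0$, so $\ker\textrm{Hess}\, J_1(0)=\ker(\zeta\I-\KK)\cap H_-$, which has dimension $\ge2$, meets $\textrm{Fix}(S^1)$ only in $\{0\}$, and on which $\textrm{Hess}\, J_2(0)=\KK$ acts as $\zeta\I$; symmetrically, under (B$_-$) one gets $\ker\textrm{Hess}\, J_1(0)=\ker(\zeta\I+\KK)\cap H_+$ with $\textrm{Hess}\, J_2(0)$ acting as $-\zeta\I$.

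Finally I would pair the sign of $\zeta$ with the admissible sign of $\ep$. In case (B$_+$): if $\zeta>0$ then $\textrm{Hess}\, J_2(0)|_{\ker\textrm{Hess}\, J_1(0)}=\zeta\I$ is positive definite and Theorem \ref{bifmaw} applies for small $\ep>0$; if $\zeta<0$ it is negative definite and Remark \ref{bifrem1} applies for small $\ep<0$; both alternatives amount to the single condition $\ep\zeta>0$. In case (B$_-$) the restriction is $-\zeta\I$ and the same dichotomy gives the condition $-\ep\zeta>0$. In either case Theorem \ref{bifmaw}, respectively Remark \ref{bifrem1}, produces at least $\frac{1}{2}\dim\bigl(\ker(\zeta\I\mp\KK)\cap H_\mp\bigr)$ of $S^1$-orbit solutions $\{(\la_k(\ep),\Theta(\th)h_k(\ep))\mid\th\in S^1\}$ with $h_k(\ep)\to0$, $\la_k(\ep)\to0$ as $\ep\to0$ and $h_k(\ep)\ne0$, which is the assertion. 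There is no genuinely hard step here; the corollary is a bookkeeping specialization of Theorem \ref{bifmaw}, and the only point that needs care is the correct pairing --- via Remark \ref{bifrem1} whenever necessary --- of the definiteness of $\textrm{Hess}\, J_2(0)$ on $\ker\textrm{Hess}\, J_1(0)$ with the sign of $\ep$, together with noting that $\ker(\zeta\I\mp\KK)\cap H_\mp$ is automatically finite dimensional by compactness of $\KK$ and of even dimension by Remark \ref{birem1b}, so that $\frac{1}{2}\dim$ is a positive integer.
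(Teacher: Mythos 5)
Your proposal is correct and follows essentially the same route as the paper: identify \eqref{bifeq2} as $\nabla J_1+\la\nabla J_2=0$ for the displayed pair $(J_1,J_2)$, verify (H1)--(H5) using (A) and (B$_\pm$) via the splitting $P_\pm\KK=\KK P_\pm$, and invoke Theorem \ref{bifmaw} together with Remark \ref{bifrem1}. Your explicit pairing of the sign of $\zeta$ with the admissible sign of $\ep$ (yielding $\pm\ep\zeta>0$) is exactly what the paper compresses into its remark that $\textrm{Hess}\, J_2(0)$ restricted to $\ker\textrm{Hess}\, J_1(0)$ equals $\pm\zeta\I$ (cf.\ Remark \ref{bifrem1}), so no new ideas are needed.
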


\begin{remark}\label{bifrem2}
If $\textrm{Fix}(S^1)=\{0\}$ then (B$_+$) holds if
\begin{itemize}
\item[(i)] $-\zeta\notin \sigma(\KK/H_+)$, $\zeta\in \sigma(\KK/H_-)$ and $\zeta$ has a multiplicity at least $2$,
\end{itemize}
while (B$_-$) holds if
\begin{itemize}
\item[(ii)] $-\zeta\in \sigma(\KK/H_+)$, $\zeta\notin \sigma(\KK/H_-)$ and $-\zeta$ has a multiplicity at least $2$,
\end{itemize}
respectively.
\end{remark}

\subsection{Proof of Theorem \ref{th2}}

We again assume for simplicity $T=2\pi$. So let $r=\bar r_1\in(0,1)$ and the equation
$$
-\nu=\Phi\left(\bar r_1+k\right)
$$
has solutions $k_1,k_2,\cdots,k_{m_1}\in\Z$ which are either all nonnegative, or all negative. Next \eqref{be2} has the form (cf. \eqref{ea1})
\begin{equation}\label{be3}
2(\nu \I_+-\nu \I_-)-(1+\la)\left(\wt K_r\LL_rU+\Psi_r(U)\right)=0
\end{equation}
and
$$
\begin{gathered}
H=X_r,\quad \zeta=2\nu,\quad H_\pm=X_\pm,\\
\KK=-\wt K_r\LL_r,\quad \FF(u)=-\int_0^{2\pi}F(|U(z)|^2)dz.\end{gathered}
$$
Isometric representation $\Theta$ is naturally given as
$$
\Theta(\th)U(z):=U(z+\th),
$$
i.e.
$$
\Theta(\th)\left(\sum_{k\in\Z}U_k\eu^{(\bar r_1+k)z\imath}\right)=\sum_{k\in\Z}U_k\eu^{\th k\imath}\eu^{(\bar r_1+k)z\imath}.
$$
Note $\textrm{Fix}(S^1)=\{0\}$. It is easy to verify (R) for $\Theta$. By results of Section \ref{trav}, we get both $\KK H_\pm\subset H_\pm$ and assumption (A) holds, and moreover
$$
\sigma\left(\KK/H_\pm\right)=\left\{\pm 2\Phi(\bar r_1+k)\mid k\in \Z_\pm\right\}.
$$
Note $\Z_+=\{0\}\cup\N$ and $\Z_-=-\N$. Hence (i) of Remark \ref{bifrem2} is satisfied if
$$
-\nu\notin\left\{\Phi(\bar r_1+k)\mid k\in\Z_+\right\},\quad -\nu\in\left\{\Phi(\bar r_1+k)\mid k\in\Z_-\right\},
$$
while (ii) if
$$
-\nu\in\left\{\Phi(\bar r_1+k)\mid k\in\Z_+\right\},\quad -\nu\notin\left\{\Phi(\bar r_1+k)\mid k\in\Z_-\right\}.
$$
But these are precisely assumptions of Theorem \ref{th2}. So its proof is complete by Corollary \ref{bifcor1} and Remark \ref{bifrem2}.

\section{Traveling Waves for Higher Dimensional DNLS}\label{higher}

In this section, we first show how to extend previous results for 2-dimensional DNLS (2D DNLS) \cite{CKMF,CCMK,GFB} of forms
\begin{equation}\label{2e1}
\begin{gathered}
\imath \dot u_{n,m}=\sum\limits_{(i,j)\in\Z^2_0}a_{i,j}\Delta_{i,j}u_{n,m}+f(|u_{n,m}|^2)u_{n,m},\quad (n,m)\in\Z^2\\
=2\sum\limits_{(i,j)\in\Z^2_0}a_{i,j}\left(u_{n+i,m+j}-u_{n,m}\right)+f(|u_{n,m}|^2)u_{n,m},
\end{gathered}
\end{equation}
where $u_{n,m}\in\C$, $\Z_0^2:=\Z^2\setminus\{(0,0)\}$, $\Delta_{i,j}u_{n,m}:=u_{n+i,m+j}+u_{n-i,m-j}-2u_{n,m}$ are $2$-dimensional discrete Laplacians, $f$ satisfies (H1) and $a_{i,j}=a_{-i,-j}$ along with $\sum\limits_{(i,j)\in\Z^2_0}|a_{i,j}|<\infty$ and all $a_{i,j}$ are not zero.

Again, \eqref{2e1} conserves two dynamical invariants
$$
\begin{gathered}
\sum\limits_{(n,m)\in\Z^2}|u_{n,m}|^2\quad -\textrm{the norm},\\
\sum\limits_{(n,m)\in\Z^2}\left[-\sum\limits_{(i,j)\in\Z^2_0}a_{i,j}\left|u_{n+i,m+j}-u_{n,m}\right|^2+F(|u_{n,m}|^2)\right]\quad -\textrm{the energy}.
\end{gathered}
$$
We look for traveling wave solutions of \eqref{2e1} of the form
\begin{equation}\label{2e2a}
u_{n,m}(t)=U(n\cos \th+m\sin\th-\nu t)
\end{equation}
with a direction $(\cos \th,\sin\th)$ \cite{FR2}.  Hence we are interested in the equation
\begin{equation}\label{2e2}
-\nu \imath U'(z)=\sum\limits_{(i,j)\in\Z^2_0}a_{i,j}\partial_{i,j}U(z)+f(|U(z)|^2)U(z)\, ,
\end{equation}
where $z=n\cos \th+m\sin\th-\nu t$, $\nu\ne0$ and
$$
\partial_{i,j}U(z):=U(z+i\cos\th+j\sin\th)+U(z-i\cos\th-j\sin\th)-2U(z).
$$
We see that \eqref{2e2} has a very similar form like \eqref{e2}. So we can directly repeat the above arguments, where now instead of $\Phi(x)$ we get
$$
\Phi_\th(x):=\frac{4}{x}\sum\limits_{(i,j)\in\Z^2_0}a_{i,j}\sin^2\frac{x(i\cos\th+j\sin\th)}{2}.
$$
Set $\bar R_\th:=\sup_\R\Phi_\th$. Summarizing, Theorems \ref{th1} and \ref{th2} have the following analogies:

\begin{theorem}\label{2th1} Let (H1) hold and $T>0$, $\th\in[0,2\pi)$. Then for almost each $\nu \in \R\setminus\{0\}$ and any rational $r\in \Q\cap (0,1)$, there is a nonzero periodic traveling wave solution \eqref{2e2a} of \eqref{2e1} with $U\in C^1(\R,\C)$ satisfying \eqref{prop}.
Moreover, for any $\nu \in \R\setminus\{0\}$, there is at most a finite number of $\bar r_{1,\th},\bar r_{2,\th},\cdots,\bar r_{m_\th,\th}\in (0,1)$ such that equation
$$
-\nu=\Phi_\th\left(\frac{2\pi}{T}(\bar r_{j,\th}+k)\right)
$$
has a solution $k\in\Z$. Then for any $r\in (0,1)\setminus \{\bar r_{1,\th},\bar r_{2,\th},\cdots,\bar r_{m_\th,\th}\}$ there is a nonzero quasi periodic traveling wave solution \eqref{2e2a} of \eqref{2e1} with the above properties. In particular, for any $|\nu|>\bar R_\th$ and $r\in(0,1)$, there is such a nonzero quasi periodic traveling wave solution.\end{theorem}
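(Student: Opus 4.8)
The plan is to transcribe the proof of Theorem \ref{th1} almost verbatim, replacing the one-dimensional discrete Laplacian and its symbol $\Phi$ by their $\th$-dependent counterparts. As there I normalize $T=2\pi$, treat $\nu>0$ (the case $\nu<0$ being symmetric), identify $\C\simeq\R^2$, fix $r\in(0,1)$, and work on the very same real Hilbert spaces $L_r^2$, $X_r=W_r^{1/2,2}(S^{2\pi},\C)$, $Y_r=W_r^{1,2}(S^{2\pi},\C)$, with the same inner products, the same compact embedding $X_r\hookrightarrow L_r^{\wt s}$ of Lemma \ref{lem1}, the same bilinear form $B_r$, the same orthogonal splitting $X_r=X_+\oplus X_-$ (with $\I_\pm:X_\pm\to X_\pm$), and the same compact operator $\wt K_r$. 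The only structural change is the operator
$$
\LL_{r,\th}:L_r^2\to L_r^2,\qquad \LL_{r,\th}U:=\sum_{(i,j)\in\Z_0^2}a_{i,j}\partial_{i,j}U,
$$
which is bounded because $\sum_{(i,j)\in\Z_0^2}|a_{i,j}|<\infty$. Exactly as in Section \ref{trav}, the functional
$$
I_{r,\th}(U):=\frac{\nu}{2}B_r(U,U)+\int_0^{2\pi}\left\{\sum_{(i,j)\in\Z_0^2}a_{i,j}\bigl|U(z+i\cos\th+j\sin\th)-U(z)\bigr|^2-F(|U(z)|^2)\right\}dz
$$
lies in $C^1(X_r,\R)$, has gradient $\nabla I_{r,\th}(U)=(2\nu\I_+-2\nu\I_--\wt K_r\LL_{r,\th}-\Psi_r)(U)$ with $\Psi_r$ the same compact mapping as before (so Lemma \ref{lem2} still gives $D\Psi_r(0)=0$), and its critical points are weak, hence $C^1$, solutions of \eqref{2e2} satisfying \eqref{prop}, i.e.\ traveling waves \eqref{2e2a} of \eqref{2e1}.

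The spectral analysis is identical. Since $\LL_{r,\th}$ multiplies the Fourier mode $\eu^{(r+k)\imath z}$ by the real number $-(r+k)\Phi_\th(r+k)$, the operator $A_{r,\th}:=2\nu\I_+-2\nu\I_--\wt K_r\LL_{r,\th}$ is bounded and self-adjoint with
$$
\begin{gathered}
A_{r,\th}U=2\sum_{k\in\Z}\textrm{sgn}(r+k)\bigl(\nu+\Phi_\th(r+k)\bigr)U_k\eu^{(r+k)\imath z},\\
\sigma(A_{r,\th})=\bigl\{2\,\textrm{sgn}(r+k)(\nu+\Phi_\th(r+k))\mid k\in\Z\bigr\}.
\end{gathered}
$$
Hence $0\notin\sigma(A_{r,\th})$ precisely when the nonresonance condition $-\nu\ne\Phi_\th(r+k)$ holds for all $k\in\Z$ (the analogue of \eqref{nonres}); and since $\Phi_\th$ is continuous on $\R\setminus\{0\}$ with $\Phi_\th(x)\to0$ as $|x|\to\infty$ — both consequences of $\sum_{(i,j)\in\Z_0^2}|a_{i,j}|<\infty$, argued as in Remark \ref{inrem1}, even though $\Phi_\th$ in general no longer vanishes on $2\pi\Z\setminus\{0\}$ — the eigenvalues accumulate at $2\nu>0$ along $k\to+\infty$ and at $-2\nu<0$ along $k\to-\infty$, so the positive and negative spectral subspaces $X_{1,r,\th}$, $X_{2,r,\th}$ of $A_{r,\th}$ are both infinite-dimensional. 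By Remark \ref{rem1}, under the nonresonance condition $0$ is an indefinite nondegenerate critical point of $I_{r,\th}$, hence $I_{r,\th}$ satisfies the local linking condition at $0$ of Definition \ref{def1}, i.e.\ $(I_2)$ of Theorem \ref{thma1} holds.

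The hypotheses $(I_1)$, $(I_3)$, $(I_4)$ of Theorem \ref{thma1} are then verified word for word as in Section \ref{trav}: passing to the scalar product adapted to $X_r=X_{1,r,\th}\oplus X_{2,r,\th}$ and taking $X_n$ to be $X_{1,r,\th}$ together with the first $n$ eigenvectors of $A_{r,\th}$ inside $X_{2,r,\th}$ gives $(I_4)$ with bounded self-adjoint part $A_{r,\th}$ and compact part $-\Psi_r$, while the coercivity of $I_{r,\th}$ to $-\infty$ on each $X_n$ and the (PS)$^*$-condition follow from the same chain of inequalities, which use only (H1), the boundedness of $\wt K_r\LL_{r,\th}$, the compact embedding $X_r\hookrightarrow L^{2(s+1)}_r$ and the compactness of $K_r\Psi_r$, none of these depending on the dispersion kernel. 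Theorem \ref{thma1} then produces, whenever $-\nu\ne\Phi_\th\bigl(\frac{2\pi}{T}(r+k)\bigr)$ for all $k\in\Z$, a nonzero $C^1$ solution $U$ of \eqref{2e2} satisfying \eqref{prop}, hence a nonzero traveling wave \eqref{2e2a} of \eqref{2e1}. Since the set $\bigl\{\Phi_\th\bigl(\frac{2\pi}{T}(r+k)\bigr)\mid r\in\Q\cap(0,1),\,k\in\Z\bigr\}$ is countable, this nonresonance condition holds for almost every $\nu\in\R\setminus\{0\}$ and every rational $r$ (periodicity of $U$ for such $r$ being as in Remark \ref{rem2}), which gives the first assertion; for a fixed $\nu$ the decay $\Phi_\th(x)\to0$ confines the relevant $k$ to a finite set and the finiteness of the exceptional set $\{\bar r_{1,\th},\cdots,\bar r_{m_\th,\th}\}$ follows exactly as in Theorem \ref{th1}, which gives the second; and $|\nu|>\bar R_\th$ forces the nonresonance condition for every $r\in(0,1)$, which gives the last — vacuously so when $\bar R_\th=+\infty$.

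I do not anticipate a genuine obstacle: the argument is structurally identical to that of Theorem \ref{th1} and the estimates there are insensitive to the precise form of the coupling. The only points requiring a short check are that $\LL_{r,\th}$ and $A_{r,\th}$ are bounded and self-adjoint — immediate from $\sum_{(i,j)\in\Z_0^2}|a_{i,j}|<\infty$ together with $a_{i,j}=a_{-i,-j}$ — and that $\Phi_\th$ still enjoys the two features actually used, namely continuity off the origin and decay at infinity, which follow from the summability of $\{a_{i,j}\}$ exactly as in Remark \ref{inrem1}.
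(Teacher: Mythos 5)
Your proposal is correct and follows essentially the same route as the paper: the paper's own treatment of Theorem \ref{2th1} consists precisely of observing that \eqref{2e2} has the same structure as \eqref{e2} and repeating the variational argument of Section \ref{trav} with $\LL_{r,\th}$ and $\Phi_\th$ in place of $\LL_r$ and $\Phi$. Your added checks (symbol of $\LL_{r,\th}$ on the modes $\eu^{(r+k)\imath z}$, continuity of $\Phi_\th$ off the origin and its decay at infinity from $\sum_{(i,j)\in\Z_0^2}|a_{i,j}|<\infty$, and the resulting spectrum of $A_{r,\th}$) are exactly the points the paper leaves implicit, and they are carried out correctly.
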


\begin{theorem}\label{2th2}
Suppose $f\in C^2(\R_+,\R)$ with $f(0)=0$. If there are $\bar r_{1,\th}\in(0,1)$, $T>0$, $\th\in[0,2\pi)$ and $\nu\in\RR\Phi_\th\setminus\{0\}$ such that all integer number solutions $k_{1},k_{2},\cdots,k_{m_{1,\th}}$ of equation
$$
-\nu=\Phi_\th\left(\frac{2\pi}{T}(\bar r_{1,\th}+k)\right)
$$
are either nonnegative or negative, and $m_{1,\th}>0$. Then for
any $\ep>0$ small there are $m_{1,\th}$ branches of nonzero quasi
periodic traveling wave solutions \eqref{2e2a} of \eqref{2e1} with
$U_{j,\ep}\in C^1(\R,\C)$, $j=1,2,\cdots,m_{1,\th}$, and nonzero
velocity $\nu_{\ep}$ satisfying $U_{j,\ep}(z+T)=\eu^{2\pi \bar
r_1\imath}U(z)_{j,\ep}$, $\forall z\in \R$ along with $\nu_\ep\to
\nu$ and $U_{j,\ep}\rightrightarrows 0$ uniformly on $\R$ as
$\ep\to0$.
\end{theorem}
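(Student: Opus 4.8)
The plan is to carry the proof of Theorem~\ref{th2} over essentially verbatim, with $\Phi$ replaced by $\Phi_\th$ and the one--dimensional difference operator replaced by its $\Z^2$ counterpart. Set $T=2\pi$ without loss of generality, fix $r=\bar r_{1,\th}\in(0,1)$, and rescale the velocity in \eqref{2e2} by $\nu\leftrightarrow\nu/(1+\la)$, so that a function $U\in X_r$ solving the scalar equation
$$
-\nu\imath U'(z)=(1+\la)\Bigl(\sum_{(i,j)\in\Z^2_0}a_{i,j}\partial_{i,j}U(z)+f(|U(z)|^2)U(z)\Bigr)
$$
corresponds to a traveling wave $u_{n,m}(t)=U\bigl(n\cos\th+m\sin\th-\frac{\nu}{1+\la}t\bigr)$ of \eqref{2e1}. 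Since this is again a scalar equation in the single variable $z$, the whole functional--analytic apparatus of Section~\ref{trav}---the spaces $L_r^{\wt s}$, $X_r$, $Y_r$, the compact embedding of Lemma~\ref{lem1}, the operator $\wt K_r$ of \eqref{e3}, and the compactness and differentiability of the Nemytskij term from Lemma~\ref{lem2}---transfers with no change. Writing $\LL_{r,\th}U:=\sum_{(i,j)\in\Z^2_0}a_{i,j}\partial_{i,j}U$, the Euler--Lagrange form of the rescaled equation is, in analogy with \eqref{be3},
$$
2(\nu\I_+-\nu\I_-)-(1+\la)(\wt K_r\LL_{r,\th}U+\Psi_r(U))=0,
$$
whose weak solutions are $C^1$--smooth by the same regularity argument as before.

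Next I would fit this into the abstract bifurcation scheme of Section~\ref{prelim2}: take $H=X_r$, $H_\pm=X_\pm$, $\zeta=2\nu$, $\KK=-\wt K_r\LL_{r,\th}$, $\FF(U)=-\int_0^{2\pi}F(|U(z)|^2)\,dz$, and the isometric circle action $\Theta(\phi)U(z):=U(z+\phi)$, for which $\textrm{Fix}(S^1)=\{0\}$ and property (R) is immediate. Assumption (A) is checked exactly as in the one--dimensional case: $\wt K_r$ is compact and $\LL_{r,\th}$ is bounded self-adjoint on $L_r^2$ with $\LL_{r,\th}X_\pm\subset X_\pm$, so $\KK$ is compact self-adjoint with $\KK H_\pm\subset H_\pm$; that $\FF\in C^2(X_r,\R)$ with $\FF(0)=0$, $\nabla\FF(0)=0$ and $\textrm{Hess}\,\FF(0)=0$ follows from $f\in C^2(\R_+,\R)$, $f(0)=0$, together with (H1) and the Sobolev inequalities underlying Lemma~\ref{lem2} (in particular $D\Psi_r(0)=0$ forces the Hessian of $\FF$ at $0$ to vanish); and the $S^1$--invariance of $\KK$ and $\FF$ is clear since $|U(z+\phi)|=|U(z)|$ and every $\partial_{i,j}$ commutes with the shift.

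A computation on Fourier modes, using the identity $1-\cos\alpha=2\sin^2(\alpha/2)$ exactly as in the derivation of $\sigma(A_r)$ in Section~\ref{trav}, gives
$$
\sigma(\KK/H_\pm)=\{\pm2\Phi_\th(\bar r_{1,\th}+k)\mid k\in\Z_\pm\}.
$$
The hypothesis that every integer solution $k$ of $-\nu=\Phi_\th(\bar r_{1,\th}+k)$ is nonnegative (resp. negative), with $m_{1,\th}>0$ such solutions, says precisely that $-\nu\in\{\Phi_\th(\bar r_{1,\th}+k)\mid k\in\Z_+\}$ while $-\nu\notin\{\Phi_\th(\bar r_{1,\th}+k)\mid k\in\Z_-\}$ (resp. the reverse), i.e. condition (ii) (resp. (i)) of Remark~\ref{bifrem2} holds with $\zeta=2\nu$. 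Because $\C$ is identified with $\R^2$, each resonant Fourier mode $\eu^{(\bar r_{1,\th}+k)z\imath}$ contributes a two--dimensional real eigenspace, so the kernel $\ker(\zeta\I\mp\KK)\cap H_\mp$ has dimension exactly $2m_{1,\th}$; in particular its multiplicity is at least $2$. Corollary~\ref{bifcor1} (together with Remark~\ref{bifrem2}) then gives, for each small $\ep\ne0$ of the sign dictated there, exactly $\frac{1}{2}\dim\ker=m_{1,\th}$ branches of $S^1$--orbits of nonzero solutions $U_{j,\ep}$, $j=1,\dots,m_{1,\th}$, with $\la_j(\ep)\to0$ and $U_{j,\ep}\to0$ in $X_r$ as $\ep\to0$. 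Setting $\nu_\ep=\nu/(1+\la_j(\ep))$ gives nonzero velocities with $\nu_\ep\to\nu$; the membership $U_{j,\ep}\in X_{\bar r_{1,\th}}$ is exactly the phase relation $U_{j,\ep}(z+T)=\eu^{2\pi\bar r_{1,\th}\imath}U_{j,\ep}(z)$; and feeding $U_{j,\ep}\to0$ in $X_r$ back into the equation (which bounds $\|U_{j,\ep}\|_{Y_r}$ in terms of $\|U_{j,\ep}\|_{X_r}$ via Lemma~\ref{lem1}) upgrades the convergence to $Y_r=W_r^{1,2}\hookrightarrow C^0$, hence to uniform convergence on $\R$.

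There is no genuinely new ingredient compared with the one--dimensional case. The only points that require a little attention are the bookkeeping that forces the kernel dimension to be even---so that the count $\frac{1}{2}\dim\ker$ equals $m_{1,\th}$---and the verification that $\FF$ is really $C^2$ with vanishing Hessian on the fractional Sobolev space $X_r$ under only the growth bounds of (H1); both are handled precisely as in Sections~\ref{trav} and~\ref{bifur}.
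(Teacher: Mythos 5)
Your proposal is correct and takes essentially the same approach as the paper: the paper proves Theorem \ref{2th2} by repeating the proof of Theorem \ref{th2} with $\Phi_\th$ and the operator $\sum_{(i,j)\in\Z^2_0}a_{i,j}\partial_{i,j}$ in place of $\Phi$ and $\LL_r$, reducing via the rescaled velocity to Corollary \ref{bifcor1} and Remark \ref{bifrem2}, exactly as you do. Your explicit remarks on the kernel having real dimension $2m_{1,\th}$ and on upgrading $X_r$-convergence to uniform convergence via $Y_r$ merely spell out details the paper leaves implicit.
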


\begin{example}\label{2dexam1} We consider the discrete 2D Kac-Baker interaction kernel $a_{i,j}=\eu^{-|i|-|j|}$ for $(i,j)\in\Z^2_0$. Then $\sum\limits_{(i,j)\in\Z^2_0}\eu^{-|i|-|j|}=\frac{4\eu}{(e-1)^2}$ and
$$
\Phi_\th(x)=\left[\frac{(\eu+1)^2}{(\eu-1)^2}-\frac{(\eu^2-1)^2}{(1+\eu^2-2\eu\cos(x\cos\th))(1+\eu^2-2\eu\cos(x\sin\th))}\right]\frac{4}{x}.
$$
A numerical evaluation shows that function $(x,\th)\to \Phi_\th(x)$ has a maximum $\bar R\doteq9.75047$ at $x_0\doteq1.08205$ and $\th_0\doteq0.785398$. To justify this theoretically, we take $a=x\cos \th$ and $b=x\sin \th$ to transform $\Phi_\th(x)$ into
$$
\Phi(a,b)=\left[\frac{(\eu+1)^2}{(\eu-1)^2}-\frac{(\eu^2-1)^2}{(1+\eu^2-2\eu\cos a)(1+\eu^2-2\eu\cos b)}\right]\frac{4}{\sqrt{a^2+b^2}}\, .
$$
Note $\Phi(a,b)=\Phi(\pm a,\pm b)=\Phi(b,a)$. A numerical evaluation shows that function $\Phi_\th(a,b)$ has a maximum $\bar R\doteq9.75047$ at $a_0=b_0\doteq0.765123$ which correspond to $x_0$ and $\th_0$. On the other hand, if $a^2+b^2\ge 4$ then $\Phi(a,b)\le 2\frac{(\eu+1)^2}{(\eu-1)^2}\doteq9.36539<9.75047$, so $\Phi(a,b)$ achieves its maximum in the disc $D_2:=\left\{a^2+b^2\le 4\right\}$. Next, solving the system $\frac{\partial}{\partial a}\Phi(a,b)=\frac{\partial}{\partial b}\Phi(a,b)=0$ we derive $\frac{\sin a_0}{a_0}=\frac{\sin b_0}{b_0}$ at the maximum point $(a_0,b_0)\in D_2$, $a_0>0$, $b_0>0$. But the function $\frac{\sin w}{w}$ is decreasing on $[0,2]$, so $a_0=b_0$, and thus $\th_0=\pi/4$. An elementary but awkward calculus shows for function
$$
\Phi_{\pi/4}(x)=\left[\frac{(\eu+1)^2}{(\eu-1)^2}-\frac{(\eu^2-1)^2}{\left(1+\eu^2-2\eu\cos\left(x\frac{\sqrt{2}}{2}\right)\right)^2}\right]\frac{4}{x}
$$
with the graph on $[-20,20]$:
\begin{center}
    \includegraphics[clip,width=6cm]{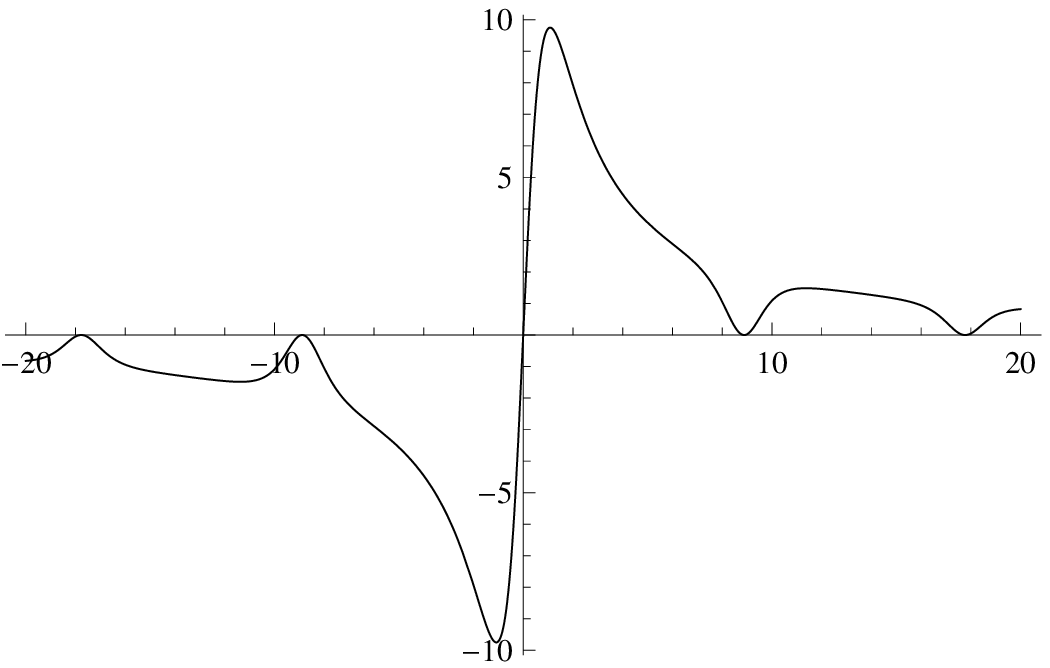}
\end{center}
that $x_0\in(0,2)$ is the only root of $\Phi_{\pi/4}'(x_0)=0$ on $(0,2)$, and then $\bar R=\Phi_{\pi/4}(x_0)$. So $\bar R$ is computed also analytically in this case.

Summarizing, Theorems \ref{2th1} and \ref{2th2} can be applied in
this case for any suitable nonzero $\nu$, and resonant traveling
waves with maximum velocities which are achieved in the diagonal
directions $\pm \th_0=\pm\pi/4$.\end{example}

Finally, it is now clear how to proceed to 3D DNLS or even to higher dimensional DNLS, so we omit further details.

\section{Traveling Waves with Frequencies}\label{more}

We could consider more general traveling wave solutions than above of forms
\begin{equation}\label{3e1}
\begin{gathered}
u_n(t)=U(n-\nu t)\eu^{\imath\om t},\\
u_{n,m}(t)=U(n\cos \th+m\sin\th-\nu t)\eu^{\imath\om t}\end{gathered}
\end{equation}
with velocity $\nu\ne 0$ and frequency $\om\ne0$ (see \cite{Pel1}). Then, there is a dispersion relation between the velocity $\nu$ and frequency $\om$ as follows. Inserting \eqref{3e1} into \eqref{e1} and \eqref{2e1}, respectively, we are interested in equations
\begin{equation}\label{3e2}
\begin{gathered}
-\nu \imath U'(z)=\sum\limits_{j\in\N}a_j\partial_jU(z)+\om U(z)+f(|U(z)|^2)U(z),\\
-\nu \imath U'(z)=\sum\limits_{(i,j)\in\Z^2_0}a_{i,j}\partial_{i,j}U(z)+\om U(z)+f(|U(z)|^2)U(z),\end{gathered}
\end{equation}
respectively. We see that \eqref{e2}, \eqref{2e2} and \eqref{3e2} are very similar, so we can repeat the above arguments to \eqref{3e2} when instead of $\Phi(x)$ and $\Phi_\th(x)$ now we have
\begin{equation}\label{3e3}
\Phi(x,\om):=\Phi(x)-\frac{\om}{x}\, ,\quad \Phi_\th(x,\om):=\Phi_\th(x)-\frac{\om}{x}\, ,
\end{equation}
respectively. Consequently, we have analogies of Theorems \ref{th1}, \ref{th2}, \ref{2th1} and \ref{2th2} to \eqref{3e2} but we do not state them since they are obvious.
\begin{example}\label{3exam1} We consider the discrete Kac-Baker interaction kernel from Example \ref{exam2}. Then
$$
\Phi(x,\om)=\frac{2\eu(\eu+1)(1-\cos x)}{(\eu-1)x(\eu^2+1-2\eu\cos x)}-\frac{\om}{x}\, .
$$
To be more concrete, we first take $\om=1$, and then $\Phi(x,1)$ has the graph on $[-4\pi,4\pi]$:
\begin{center}
    \includegraphics[clip,width=6cm]{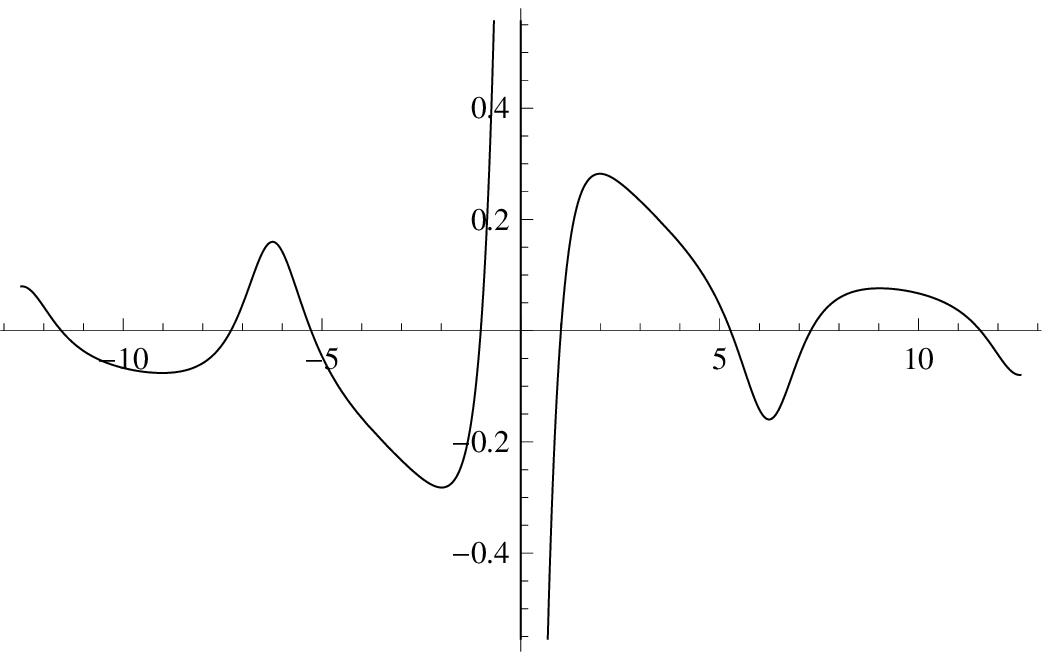}
\end{center}
with $\lim_{x\to 0_\pm}\Phi(x,1)=\mp\infty$. A numerical evaluation shows that function $\Phi(x,1)$ has a maximum $\bar R\doteq0.282071$ on $(0,\infty)$ at $x_0\doteq1.9905$. Consequently, the analogy of Theorem \ref{th1} can be applied now to any $\nu\ne0$ while the analogy of Theorem \ref{th2} can be applied for almost any $\nu\in\R\setminus[-0.282071,0.282071]$, while for nonzero $\nu\in[-0.282071,0.282071]$ could be problematic in general.

On the other hand for $\om=-1$, $\Phi(x,-1)$ has the graph on $[-4\pi,4\pi]$:
\begin{center}
    \includegraphics[clip,width=6cm]{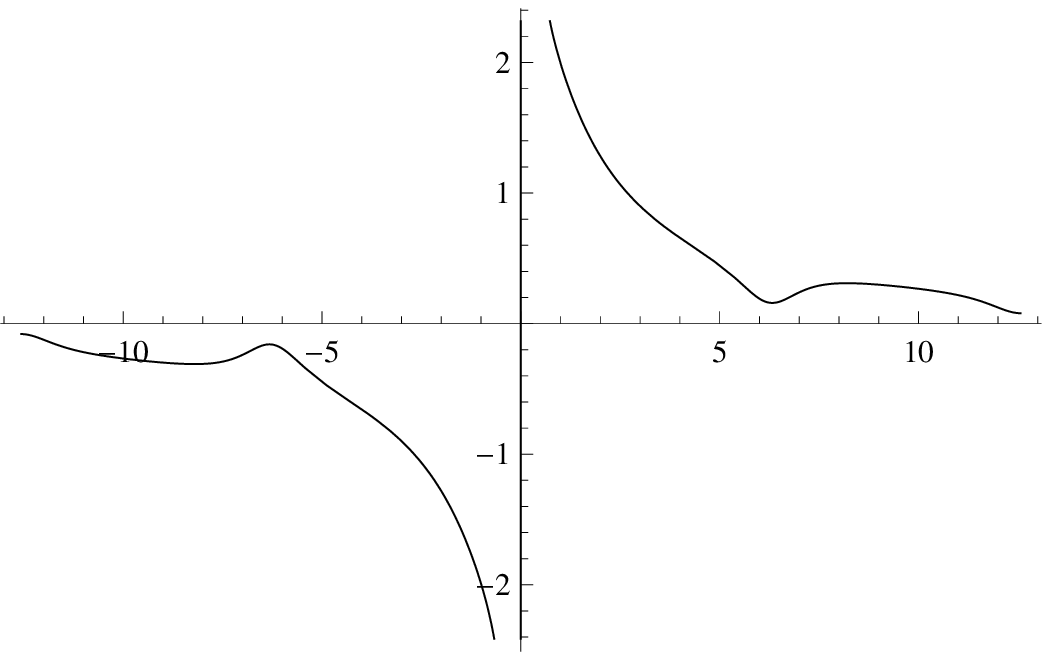}
\end{center}
with $\lim_{x\to 0_\pm}\Phi(x,-1)=\pm\infty$. Consequently, the analogy of Theorem \ref{th1} can again be applied now to any $\nu\ne0$ while the analogy of Theorem \ref{th2} can now be applied for almost any $\nu\ne0$.  Of course now we have totally different situations than in Example \ref{exam2} for traveling waves without frequencies by comparing the above graphs with that one in Example \ref{exam2}.\end{example}

\subsection{Acknowledgements} Michal Fe\v ckan is partially supported by the Grants VEGA-MS 1/0098/08 and VEGA-SAV 2/7140/27. Vassilis Rothos is partially supported by Research Grant-International Relations of AUTH.

\end{document}